\newtheorem{theorem}{Theorem}[section]
\newtheorem{lemma}[theorem]{Lemma}
\newenvironment{proof}[1][Proof]{\begin{trivlist}
\item[\hskip \labelsep {\bfseries #1}]}{\end{trivlist}}
\newcommand{\qed}{\nobreak \ifvmode \relax \else
      \ifdim\lastskip<1.5em \hskip-\lastskip
      \hskip1.5em plus0em minus0.5em \fi \nobreak
      \vrule height0.75em width0.5em depth0.25e                                                                                                        m\fi}
\title{Post-selection point and interval estimation of signal sizes in Gaussian samples}
\author{Stephen Reid$^1$, Jonathan Taylor$^1$ and Robert Tibshirani$^2$}
\date{\normalsize $^1$ Department of Statistics, Stanford University, $^2$ Departments of Health, Research \& Policy and Statistics, Stanford University}
\begin{document}

\maketitle

\begin{abstract}
We tackle the problem of the estimation of a vector of means from a single vector-valued observation $y$. Whereas previous work reduces the size of the estimates for the largest (absolute) sample elements via shrinkage (like James-Stein) or biases estimated via empirical Bayes methodology, we take a novel approach. We adapt recent developments by \citet{LeeSun2TaylorPostSel} in post selection inference for the Lasso to the orthogonal setting, where sample elements have different underlying signal sizes. This is exactly the setup encountered when estimating many means. It is shown that other selection procedures, like selecting the $K$ largest (absolute) sample elements and the Benjamini-Hochberg procedure, can be cast into their framework, allowing us to leverage their results. Point and interval estimates for signal sizes are proposed. These seem to perform quite well against competitors, both recent and more tenured.

Furthermore, we prove an upper bound to the worst case risk of our estimator, when combined with the Benjamini-Hochberg procedure, and show that it is within a constant multiple of the minimax risk over a rich set of parameter spaces meant to evoke sparsity.
\end{abstract}


\section{Introduction}
\label{sec:introduction}
In our world of big data, large scale studies and multiple testing, we are frequently confronted with the problem of estimating a vector of means $\mu$ form a single vector of observed data $y$. We choose to model the setup in the following way:

\begin{equation}
Y_i \sim N(\mu_i, \sigma^2)
\label{eqManyMeansModel}
\end{equation}
for $i = 1, 2, ..., n$. 

Here we have, in principle, a different effect size ($\mu_i$) for every data point ($Y_i$), which contrasts with the more classical problem of estimating a single effect size (mean) $\mu$ given a sample of size $n$ from the same population. The maximum likelihood estimator for each $\mu_i$ is $Y_i$ and the average mean squared error (MSE) over the entire $\mu$-vector is $\sigma^2$.

Applications often focus on a smaller part of $\mu$. An implicit assumption is one of sparsity, with many effect sizes $\mu_i = 0$. The goal is to identify those select few non-zero effects bobbing serendipitously like some precious flotsam in a sea of nulls.

Useful tools for the detection of these non-zero effects have been developed. Most of them --- be it those controlling the family-wise error rate (FWER), false discovery rate (FDR) or some other method of dubious validity --- propose as non-zero those effects corresponding to sample elements with the largest (absolute) size. 

Intuitive as this may be, these large (absolute) sample elements are poor estimators for their underlying effect size. A sample element is large for two reasons: its effect size is large and its random error pushed it away from zero. We experience a selection bias, whereby the largest sample elements tend to be those lucky few with large positive error terms, making them upwardly biased for their effect sizes.

As a demonstration of this selection bias or ``winner's curse'', consider the setup in \eqref{eqManyMeansModel} with $\mu_i = 0$ for all $i$ and $\sigma^2 = 1$ and let 
\[
|Y|_{(1)} \geq |Y|_{(2)} \geq ... \geq |Y|_{(n)}
\]
be the order statistics of the absolute values of the sample. Also, let $r(k)$ define the permutation such that $|Y_{r(k)}| = |Y|_{(k)}$. By a second order Taylor expansion, we can approximate

\begin{align*}
  E[|Y|_{(k)}] &= E\left[\Phi^{-1}\left(\frac{U_{(k)} - 1}{2}\right)\right] \\
  &\approx \Phi^{-1}\left(\frac{-k}{2(n+1)}\right)\left[1 + \frac{k(n-k+1)}{8(n+1)^2(n+2)\phi\left(\Phi^{-1}\left(\frac{-k}{2(n+1)}\right)\right)}\right]
\end{align*}
since $U_{(k)}$, the $k^{th}$ order statistic of a sample of size $n$ from a Uniform$(0, 1)$ distribution, has a Beta$(n-k+1, k)$ distribution. Figure~\ref{figWinnersCurse} shows some implications of this result for a sample of size $n = 100$. The left panel plots the (absolute) order statistics of a sample against rank, with the approximate expected value curve as reference. Notice how steeply the curve decreases as we move away from larger order statistics. The largest order statistics are considerably upwardly biased for their true effect sizes.

\begin{figure}[htb]
  \centering
  \includegraphics[width=120mm]{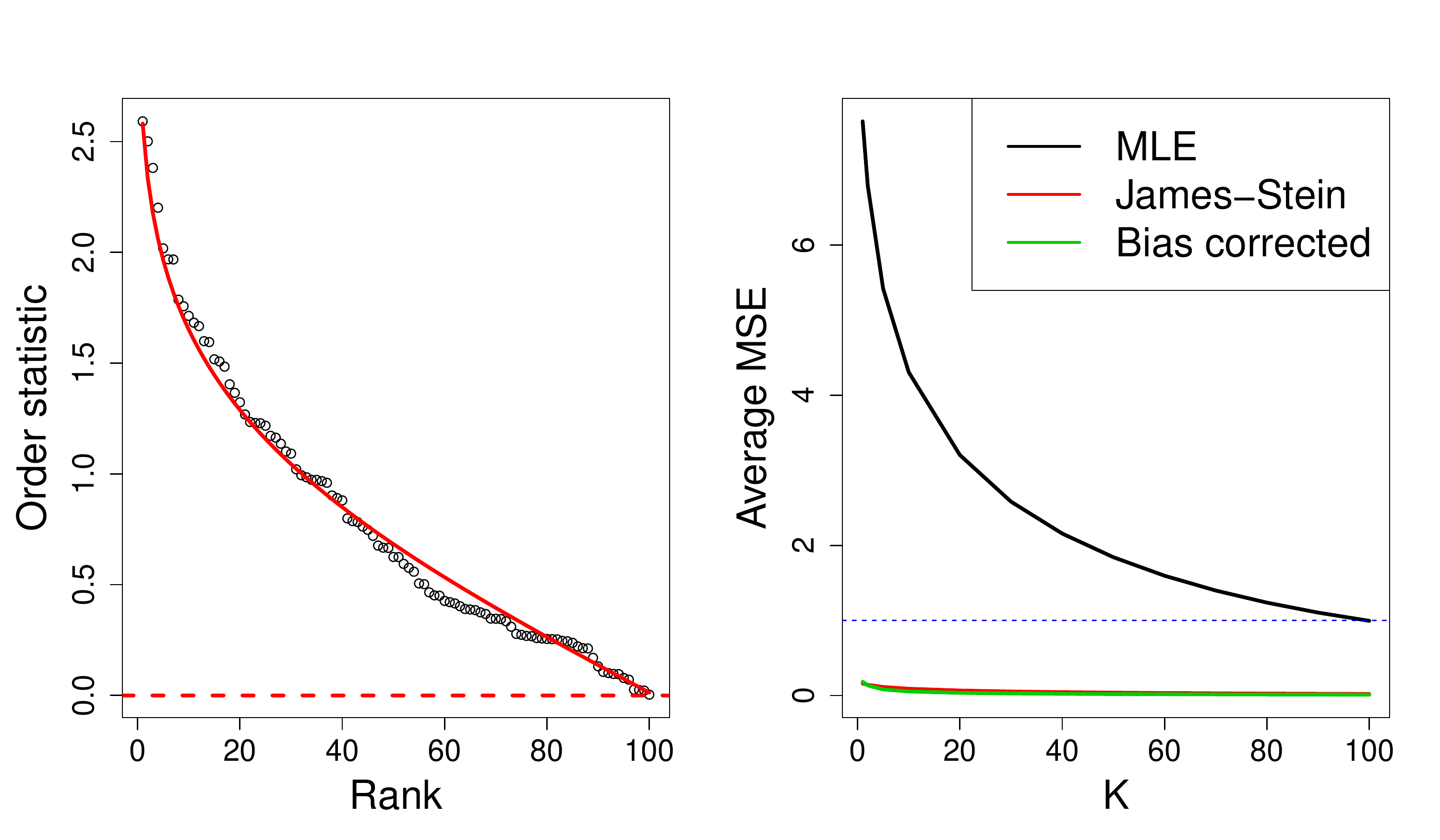}
  \caption{\emph{Left panel: Absolute order statistics from N(0,1) sample plotted as function of rank (k) with approximate expected value as computed from the second order Taylor expansion in red. Broken horizonal line shows true effect size (0). Right panel: Average MSE over first K (absolute) order statistics from S = 1000 simulated samples from N(0,1) distribution as function of K. Blue line at $\sigma^2 = 1$, the MSE over the full $\mu$-vector, for reference.}}
  \label{figWinnersCurse}
\end{figure}

The right panel of Figure~\ref{figWinnersCurse} (black curve) plots the average over $S = 1000$ simulations of the quantity

\[
MSE(K) = \frac{1}{K}\sum_{k = 1}^K(Y_{r(k)} - \mu_{r(k)})^2
\]
which is the mean squared error for effect sizes, considering only the first $K$ largest sample elements. Notice that the curve decreases (suggesting that we do worst for the very largest order statistics) and only attains the proven MSE of $\sigma^2 = 1$ at $K = n$. Stopping before $K = n$ and considering only the first $K$ order statistics, leads to very poor estimates of the effect sizes.

Two other MSE curves are plotted. The red curve is the average MSE for the James-Stein estimator:
\[
MSE^{JS}(K) = \frac{1}{K}\sum_{k = 1}^K(\hat{\mu}^{JS}_{r(k)} - \mu_{r(k)})^2
\]
where
\[
\hat{\mu}^{JS} = \left(1 - \frac{n - 2}{||Y||_2^2}\right)Y
\]

The green curve plots the MSE for the bias-corrected order statistics, subtracting the approximate expected value computed above:
\[
MSE^{BC}(K) = \frac{1}{K}\sum_{k = 1}^K\left({\rm sign}(Y_{r(k)})\left(|Y|_{(k)} - E[|Y|_{(k)}]\right) - \mu_{r(k)}\right)^2
\]

Notice how these two estimators improve vastly on the raw order statistics at all values of $K$. These two estimators cannot be used for our purposes: James-Stein, because it always shrinks to zero and bias-correction, because it relies on the fact that all effect sizes are the same. It is clear that the raw order statistics, when considered only a few at a time, need to be adjusted before they can be deemed good estimates of the effect size.

Many tools have been developed to counter this selection bias and some of them are reviewed in the next section. Our contribution adapts the recent work on post-selection inference by \citet{TTTPostSel} and \citet{LeeSun2TaylorPostSel} to the estimation of effect sizes. These papers (the latter in particular) suggest that, conditional on having selected the $K$ largest (absolute) order statistics to represent our non-zero effects, the distribution of $Y_{r(1)}, Y_{r(2)}, ..., Y_{r(K)}$ is that of a Gaussian sample with variance $\sigma^2$ truncated to the interval $(-\infty; -|Y|_{(K + 1)}) \cup (|Y|_{(K + 1)}; \infty)$.

The density function of such a Gaussian random variable with parameters $\mu$ and $\sigma^2$, truncated to $(-\infty; a) \cup (b; \infty)$ is
\[
 f_{\mu, \sigma^2, a, b}(x) = \frac{\frac{1}{\sigma}\phi\left(\frac{x - \mu}{\sigma}\right)}{\Phi\left(\frac{a - \mu}{\sigma}\right) + 1 - \Phi\left(\frac{b - \mu}{\sigma}\right)}
\]
leading to the maximum (conditional) likelihood estimator of $\mu_{r(k)}$ as the solution to (assuming known $\sigma$)
\[
Y_{r(k)} = \mu + \sigma\frac{\phi\left(\frac{|Y|_{(K+1)} - \mu}{\sigma}\right) - \phi\left(\frac{-|Y|_{(K+1)} - \mu}{\sigma}\right)}{\Phi\left(\frac{-|Y|_{(K+1)} - \mu}{\sigma}\right) + 1 - \Phi\left(\frac{|Y|_{(K+1)} - \mu}{\sigma}\right)}
\]

We show in a simulation study that these estimators perform commendably when estimating the effect sizes of the largest $K$ effect sizes, often competing with and sometimes outperforming some of the competitors. These are discussed next. Our truncated Gaussian estimator seems to perform best when true signals are sparse, with a mixture of different signal sizes.

Section~\ref{sec:previous_work} reviews some previous work in the field of many means estimation. We attempt to define a loose grouping of different types of estimators. Section~\ref{sec:post_sel_theo} discusses the post-selection framework of \citet{LeeSun2TaylorPostSel}. In this section, we show how their framework can be translated to the orthogonal setting, allowing us to use their results to obtain post-selection estimators for underlying signal sizes. A simulation experiment is presented in Section~\ref{sec:simulation}. Here we compare the estimation performance of our proposed estimator to that of a battery of previously proposed estimators. We consider estimation performance only over the set of sample elements selected as non-zero in some first phase selection procedure. Section~\ref{sec:risk} shows how the worst case risk of our estimator is bounded within a constant multiple of the minimax risk over a rich set of parameter spaces evoking sparsity of the underlying signal. Our analysis is transferred to confidence intervals in Section~\ref{sec:ci}, where we show how the proposed interval estimator addresses many of the issues encountered in post selection confidence interval construction. We conclude in Section~\ref{sec:conclude}.


\section{Previous work}\label{sec:previous_work}
Currently, there seem to be three strands of research on the problem of dealing with selection bias and post-selection effect size estimation: an empirical Bayes approach, a classical resampling approach and one of thresholding. There is some overlap between these strands.

\subsection{Empirical Bayes and density estimation}
\citet{EfronTweedie} provides an elegant framework for considering selection bias. Using a Bayesian approach, he postulates a prior $g(\mu)$ for the effect sizes and then, assuming $Y|\mu \sim N(\mu, \sigma^2)$, quotes Tweedie's formula for the posterior mean:
\[
E(\mu|x) = \mu + \sigma^2\frac{f^\prime(x)}{f(x)}
\]
where
\[
f(x) = \int \frac{1}{\sigma} \phi\left(\frac{x - \mu}{\sigma}\right)g(\mu)\,d\mu
\]
is the marginal distribution of $Y$.

He notes that a debiased estimate thus requires a good estimate of the marginal density. This observation allows the vast literature on density estimation to be applied to the selection bias problem. The literature is too large to do it justice here, but we mention some recent developments.

\citet{EfronTweedie} suggests the form
\[
f(x) = \exp\left[\sum_{j = 0}^J\beta_jx^j\right]
\]
which represents a $J$ parameter exponential family with $\beta_0$ as normalising constant. Lindsay's method allows the calculation of $\hat{\beta}$ using Poisson generalized linear model (GLM) software. The method requires a choice of $J$.

\citet{MuralidharanMix} suggests a mixture density prior
\[
 g(\mu) = \sum_{j = 0}^{J-1}\pi_jg_j(\mu)
\]
where the mixture component densities $g_j(\mu)$ are chosen so as to lead to tractable marginals and posteriors (say as conjugate priors in an exponential family setup) and $g_0(\mu)$ usually corresponds to a point mass at zero. With tractable, closed form marginals, the parameters $\pi$ and whatever parameters are associated with $g_j$ are estimated, giving a fully parameterized form for the posterior density and hence the posterior mean.

Although both of these methods are transparent and their models fairly simple to fit, \cite{WagerDens} cites problems with these density estimators in the tails of the estimated marginal distribution, precisely where we require good estimates for our debiased effect size estimate. He proposes a geometric approach to the estimation of $f(x)$. He finds the the closest distribution to the empirical distribution of the observed $Y$s, under the constraint that it be in the form of a convolution $\phi * g$. A rapid quadratic programming algorithm is proposed which provides provably good estimates comparing favourably to the gold-standard, albeit more computationally intensive, maximum likelihood estimators of \citet{JiangZhangMLDens}. 

These authors use a generalized maximum likelihood approach to find an optimal prior distribution. They do not restrict the prior to be in some class, like mixture densities or specific conjugate priors, for example. Rather, they treat the prior as an unknown to be estimated via maximum likelihood on the marginal distribution of the observed data. Their estimator for the prior is a discrete one, with mass estimated at a grid of equally spaced points. A version of the EM algorithm is used to obtain this estimate. Once obtained, signal size estimates are obtained as posterior weighted means of the observations.

\citet{JohnSilver2004} provide a slight variation on the theme. Still using the Bayes methodology, these use the a prior with mass at 0:
\[
	g(\mu) = (1-w)\cdot\delta_0(\mu) + w\cdot\gamma(\mu)
\]
with $\gamma(\mu)$ some convenient prior (like a Gaussian or Laplacian prior). $w$ is estimated via maximum likelihood on the marginal distribution of the observed sample.

Their method is different in that it estimates the signal sizes using the posterior \textit{median} and not the mean. They choose the median, \textit{inter alia}, for its thresholding property. The median estimator has a ``threshold zone" around zero, which depends on $w$, in which the corresponding signal size estimate is set to zero, should the observation fall there. The reader is referred to the paper for further details.

\subsection{Classical bias correction and resampling techniques}
Bias corrections for the winner's curse have been diligently sought in the biological sciences. Examples include \citet{ZollnerPritchard}, \citet{SumBull2005} and \citet{ZhongPrentice2008}. The latter also uses the truncated Gaussian distribution to produce bias corrected estimates for the largest of the estimated odds ratios in a logistic regression. The authors appeal to maximum likelihood asymptotic theory, whereas our result holds exactly for finite samples.

\citet{SimonSimon2013} use the parametric bootstrap --- with mean parameter the MLE for $\mu$: $y$ --- to estimate the bias
\[
 E[Y_{(k)} - \mu_{i(k)}]
\]
where $Y_{(1)} \geq Y_{(2)} \geq ... \geq Y_{(n)}$ and $Y_{i(k)} = Y_{(k)}$. They acknowledge that their estimate of bias is itself biased, because they generate samples with mean vector $y$ and not $\mu$. They suggest a second order bootstrap to estimate this additional bias as well, adjusting their estimates accordingly. In a simulation study in their paper, they compare the performance of their estimators (over the whole mean vector) to that of \cite{WagerDens}. Although these methods perform similarly, the empirical Bayes estimator of \cite{WagerDens} seems to have a slight edge.

\subsection{Threshold estimators}
\label{sec:thresh_esti}
Some estimators define an explicit threshold, below which the signal size estimate is set to 0. If above the threshold (absolutely), the raw sample element is either left alone (hard threhsolding) or shrunken in some way (e.g. soft thresholding). Note that the hard thresholding rule suffers from the winner's curse. Soft thresholding, with its shrinkage on non-thresholded elements toward zero, is more suited to reduced post selection bias.

These methods are differentiated mostly by their choice of threshold to apply. One option is the ``universal" threshold $\sqrt{2\log(n)}$, but this has been found poorly adaptive to the sparsity of the underlying signal vector. In an attempt to make the soft thresholding estimator more adaptive, \citet{DonJohn1995}, select the threshold $t$ in the interval $[0, \sqrt{2\log(n)}]$ such that it minimizes the Stein's Unbiased Risk Estimate (SURE) for soft thresholding:
\[
SURE^{ST}(t) = n + \sum_{i = 1}^n y_i^2 \wedge t^2 - 2\sum_{i = 1}^nI\{y_i^2 \leq t^2\}
\]

They call this their SURE estimator. Another, ostensibly more adaptive, estimator, called the \textit{adaptive SURE}, is also defined. Again, the reader is encouraged to find details in the reference.

Finally, \citet{AbramBenDonJohn}, select the threshold using the Benjamini-Hochberg ($BH(q)$) procedure and analyze the worst case risk properties, over ``sparse" parameter spaces of a hard thresholding estimator using this threshold. Their asymptotic results are impressive. However, their use of the hard thresholding estimator may lead to considerable finite sample post selection biases.\\

Our proposed method relies neither on an estimate of the marginal density of the sample elements, nor requires (potentially) computationally expensive resampling techniques, nor does it appeal to asymptotic theory in order to obtain its distributional results.  Rather, it applies recent results on post-selection inference for the Lasso to the problem of correcting selection bias in effect size estimation. Distributional results hold exactly for finite samples, regardless of the method used to select the non-zero effects. The FDR controlling threshold of \cite{AbramBenDonJohn} can be cast into our framed and used to adapt to signal sparsity. Furthermore, finite sample biases inherent in their hard thresholding estimator might be reduced. Our proposal is discussed next.


\section{Selection bias correction after conditioning on the set of proposed non-zero coefficients}
\label{sec:post_sel_theo}

In their paper, \citet{LeeSun2TaylorPostSel} propose a method for post-selection inference with the Lasso. They assume
\[
 y = X\beta + \epsilon
\]
with $X \in R^{n \times p}$ and $\epsilon \sim N (0, \sigma^2I)$. The Lasso solution for $\beta$ (for some \textit{fixed} regularisation parameter $\lambda$) satisfies:
\[
\hat{\beta}_\lambda = {\rm argmin}_\beta \left[\frac{1}{2}||y - X\beta||^2_2 + \lambda||\beta||_1\right]
\]
It is well known that the Lasso sets many elements of $\hat{\beta}_\lambda$ to zero, leading to a selection effect. Suppose we have solved this optimisation problem and have available an estimate $\hat{\beta}_\lambda$, associated with which we have the set $E \in \{1, 2, \dots, p\}$ of indices of the non-zero elements of $\hat{\beta}_\lambda$ i.e. $j \in E$ if $\hat{\beta}_\lambda \neq 0$. Furthermore, define the vector $z_E$ where $z_{E, j} = {\rm sign}(\hat{\beta}_{\lambda, j})$. The authors propose that inference (on, say, underlying effect sizes of the non-zero coefficients) proceed conditional on $(E, z_E)$.

They show how the conditioning event $(E, z_E)$ is equivalent to affine constraints on $y$, i.e. $\{E, z_E\} = \{A_{L, \lambda}y \leq b_{L, \lambda} \}$ where the forms of $A_{L, \lambda}$ and $b_{L, \lambda}$ are given explicitly in the paper. Furthermore, they show, for some $\eta \in R^n$:

\begin{equation}
F^{[\mathcal{V}^-, \mathcal{V}^+]}_{\eta^\top\mu, \sigma^2\eta^\top\eta}(\eta^\top y)|\{A_{L, \lambda}y \leq b_{L, \lambda}\} \sim {\rm Unif}(0, 1)
\label{eqTruncNormAb}
\end{equation}
where $F^{[a,b]}_{\mu, \sigma^2}$ denotes the CDF of a truncated Gaussian random variable on $[a, b]$, i.e.
\[
F^{[a,b]}_{\mu, \sigma^2} = \frac{\Phi\left(\frac{x - \mu}{\sigma}\right) - \Phi\left(\frac{a - \mu}{\sigma}\right)}{\Phi\left(\frac{b - \mu}{\sigma}\right) - \Phi\left(\frac{a - \mu}{\sigma}\right)}
\]
and $\mathcal{V}^-$, $\mathcal{V}^+$ are independent of $\eta^\top y$, with forms given explicitly in the paper. The vector $\eta$ can be chosen post-selection. In particular, we may choose $\eta = z_{r(k)}e_{r(k)}$, so that $\eta^\top y = |y|_{(k)}$.

We are interested in the orthogonal case, where $X = I$ (the $n \times n$ identity matrix), $\beta = \mu$ and $\sigma^2$ is known (assume $\sigma^2 = 1$ for ease of exposition). Now $E = \{i: |y_i| > \lambda \}$. Let $I_E$ be the $n \times |E|$ matrix obtained by selecting only the columns from $I$ corresponding to the indices in $E$. Similarly, let $I_{-E}$ contain only those columns with indices \textit{not} in $E$.

The matrix $A_{L, \lambda}$ and vector $b_{L, \lambda}$, which determine the affine constraints imposed by conditioning on the selected items, have particularly simple forms:
\begin{equation}
  A_{L, \lambda} = \left(
   \begin{array}{c}
     \frac{1}{\lambda}I^\top_{-E} \\
     -\frac{1}{\lambda}I^{\top}_{-E} \\
     -\textbf{diag}(z_E)I^\top_E
   \end{array}
  \right)
\label{eqALasso}
\end{equation} 

\begin{equation}
  b_{L, \lambda} = \left(
   \begin{array}{c}
     \textbf{1}\\
     \textbf{1}\\
     -\lambda\textbf{1}
   \end{array}
  \right)
  \label{eqBLasso}
\end{equation}
Notice that $A_{L, \lambda} \in R^{(2n - |E|) \times n}$ and $b_{L, \lambda} \in R^{(2n - |E|)}$.

Should one wish to estimate the underlying signal of those selected by the (orthogonal) Lasso procedure, one would choose $\eta = e_{i}$ with $i \in E$. Then the forms of $\mathcal{V}^-$ and $\mathcal{V}^+$ are also quite simple: if $z_i = {\rm sign}(y_i) = 1$, then $\mathcal{V}^- = \lambda$ and $\mathcal{V}^+ = \infty$ and if $z_i = -1$, then $\mathcal{V}^- = -\infty$ and $\mathcal{V}^+ = -\lambda$.

These results hold conditional on the signs. However, \citet{LeeSun2TaylorPostSel} suggest that inference (and here, perhaps point estimation) would be more effective once we marginalize over signs, ``unioning out'' the sign effect. The upshot being that $y_i$, $i \in E$, each have a truncated Gaussian distribution on $(-\infty, \lambda] \cup [\lambda, \infty)$ \textit{conditional given the selection procedure}. We use these distributions to compute point estimates and confidence intervals for the list of selected non-zero sample elements. Hopefully, this conditioning information, if used efficiently, would allow us to estimate more accurately the true underlying signal of each selected sample element and not the one tainted by selection bias.

Note, however, that these results hold for \textit{fixed} $\lambda$. In practice, we are unlikely to fix $\lambda$ beforehand and fit the Lasso (tantamount to soft thresholding), but would rather adopt another selection procedure. We consider two such procedures: selecting the $K$ largest (absolute) sample elements (for fixed $K$) and the Benjamini-Hochberg (BH) procedure with fixed FDR bound $q$. This implies an adaptive choice for the value $\lambda$, making it random. 

Fortunately, it can be shown that, conditioning on just a little more information, each of these procedures can be written in the form of affine constraints $\{Ay \leq b\}$ as above, with all the distributional results and other guarantees of \citet{LeeSun2TaylorPostSel} holding as before. This is discussed in the next section. For similar analysis of a marginal screening procedure, see \citet{LeeTaylorMargSel}.

\subsection{Selecting the K largest signals}
Let $E = \{r(k): k = 1, 2, \dots, K\}$, $G = \{r(K+1)\}$ and $H = \{1, 2, \dots, n\} \setminus (E \cup G)$. The top-$K$ selection procedure can be summarized by the inequalities:
\begin{itemize}
 \item $|y_i| > |y|_{(K+1)}$ for $i \in E$

 \item $-|y|_{(K+1)} \leq y_i \leq |y|_{(K+1)}$ for $i \in H$
\end{itemize}

These inequalities suggest that, apart from conditioning on $E$ and $z_E$ as before, we should condition further on $G$ and $z_G$ for us to be able to write them as affine constraints on $y$.

The selection event $\{E, z_E, G, z_G\} = \{A_{K}y \leq b_K\}$, where:
\begin{equation}
  A_{K} = \left(
   \begin{array}{c}
     I^\top_H - \textbf{1}z_{G}e_{r(K+1)}^\top \\
     -I^\top_{H} - \textbf{1}z_{G}e_{r(K+1)}^\top \\
     -\textbf{diag}(z_E)I^\top_E + \textbf{1}z_Ge_{r(K+1)}^\top
   \end{array}
  \right)
  \label{eqAfirstK}
\end{equation}
\begin{equation}
  b_K = \left(
   \begin{array}{c}
     \textbf{0}\\
     \textbf{0}\\
     \textbf{0}
   \end{array}
  \right)
  \label{eqBfirstK}
\end{equation}
If we let $i \in E$, then with $\eta = e_i$, we again have simple forms for $\mathcal{V}^-$ and $\mathcal{V}^+$: if $z_i = 1$, then $\mathcal{V}^- = |y|_{(K+1)}$ and $\mathcal(V)^+ = \infty$, while if $z_i = -1$, then $\mathcal{V}^- = -\infty$ and $\mathcal{V}^+ = -|y|_{(K+1)}$. We union out the signs as for the Lasso and each $y_i$, $i \in E$, has a truncated Gaussian distribution on $(-\infty, -|y|_{(K+1)}] \cup [|y|_{(K+1)}, \infty)$, conditional on the selection procedure.

\subsection{Benjamini-Hochberg selection procedure}
Let $E$, $G$ and $H$ be as before, with $J = G \cup H$. Notice now that $K$ is not fixed, but determined (randomly) by the Benjamini-Hochberg procedure, which is characterized by the inequalities:
\begin{itemize}
 \item $|y_i| - \hat{t}_{K} > 0$ for $i \in E$ where $\hat{t}_K = \Phi^{-1}\left(1 - \frac{qK}{2n}\right)$
 \item $|y|_{(k)} \leq \Phi^{-1}\left(1 - \frac{qk}{2n}\right)$ for  $k = K+1, K+2, \dots, n.$
\end{itemize}
We would need to condition on $K$ before we have any hope of extracting affine constraints on $y$. Indeed, we condition on $E$ and $z_E$ as before. In addition, we condition on the ordering of the elements in $J$, given by a permutation matrix $P_J$ such that $P_Jy = (y_{r(K+2)}, y_{r(K+3)}, \dots, y_{r(n)})^\top$. This allows us to express the selection event as an affine restriction on $y$, determined by the matrices:
\begin{equation}
  A_{BH(q)} = \left(
   \begin{array}{c}
     P_JI_J^\top \\
     -P_JI_J^\top \\
     -\textbf{diag}(z_E)I_E^\top 
   \end{array}
  \right)
  \label{eqABH}
\end{equation} 
\begin{equation}
  b_{BH(q)} = \left(
   \begin{array}{c}
     q_J \\
     q_J \\
     -\textbf{$\hat{t}_K$}
   \end{array}
  \right)
  \label{eqBBH}
\end{equation}
where $q_J = (q_{|E|+1}, q_{|E|+2}, \dots, q_n)^\top$ and $q_i = \Phi^{-1}\left(1 - \frac{qi}{2n}\right)$. Similarly, as before, each $y_i$, $i \in E$, has a truncated Gaussian distribution on $(-\infty, -\hat{t}_K] \cup [\hat{t}_K, \infty)$, conditional on the selection procedure. Fortunately, the large amount of additional information conditioned on (ordering of items in $J$) does not affect the post-selection distribution of the detected signals. If it had, we may have encountered rather noisy and unstable signal size estimates (both point and interval).

\subsection{Estimating the signal size post-selection}
Estimation of the underlying signal size after selection proceeds by maximising the (conditional) likelihood. Recall that for each of the largest-$K$ and Benjamini-Hochberg procedures, we have $y_i$, $i \in E$, distributed as truncated Gaussian over interval $(-\infty, -|y|_{(K+1)}] \cup [|y|_{(K+1)}, \infty)$, with $K$ fixed in the former, and on $(-\infty, -\hat{t}_K] \cup [\hat{t}_K, \infty)$ in the latter. The signal size estimate for the $k^{th}$ largest sample element ($k \leq K$) satisfies:

\[
 \hat{\mu}_{r(k)} = {\rm argmax}_\mu\left[\frac{\phi\left(y_{r(k)} - \mu\right)}{\Phi\left(-\hat{\lambda}^{(K)}_k - \mu\right) + 1 - \Phi\left(\hat{\lambda}^{(K)}_k - \mu\right)}\right]
\]

Equivalently (after taking logs, a single derivative and setting to zero), $\hat{\mu}_{r(k)}$ satisfies:
\begin{equation}
y_{r(k)} = \hat{\mu}_{r(k)} + \frac{\phi\left(\hat{\lambda}^{(K)}_k- \hat{\mu}_{r(k)}\right) - \phi\left(\hat{\lambda}^{(K)}_k+ \hat{\mu}_{r(k)}\right)}{\Phi\left(-\hat{\lambda}^{(K)}_k - \hat{\mu}_{r(k)}\right) + 1 - \Phi\left(\hat{\lambda}^{(K)}_k- \hat{\mu}_{r(k)}\right)}
\label{eqCondMLE}
\end{equation}
where $\hat{\lambda}^{(K)}_k = |y|_{(K+1)}$ for the largest-$K$ and $\hat{\lambda}^{(K)}_k = \hat{t}_K$ for the $BH(q)$ procedure.

Equation~\eqref{eqCondMLE} defines our proposed estimator for signal size after selection. As an illustration of the behaviour of this effect size correction, consider a simpler form of \eqref{eqCondMLE}:

\begin{equation}
y = \mu + \frac{\phi(\lambda - \mu) - \phi(\lambda + \mu)}{\Phi(-\lambda - \mu) + 1 - \Phi(\lambda - \mu)}
\label{eqCondMLESimple}
\end{equation}

\begin{figure}[htb]
  \centering
  \includegraphics[width=90mm]{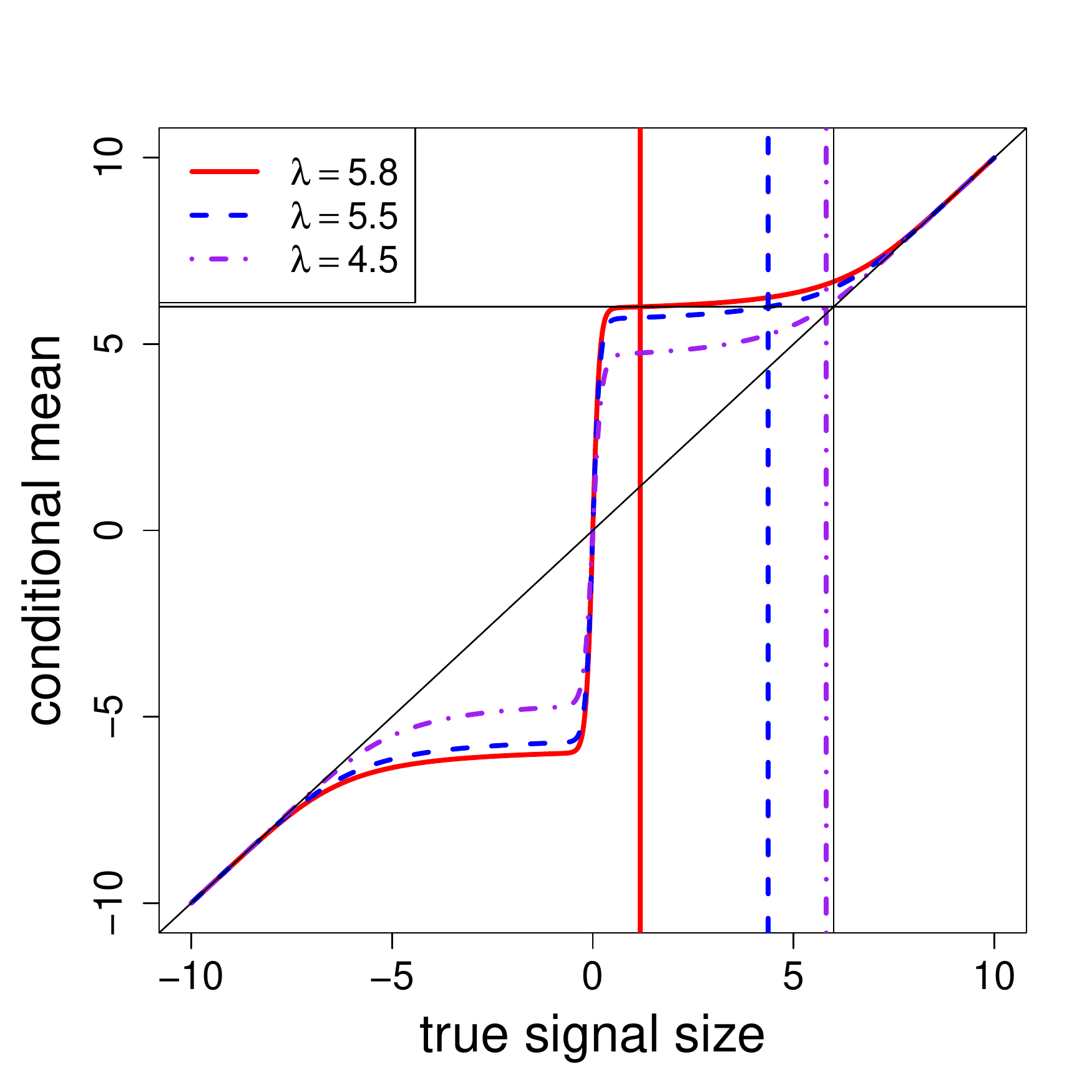}
  \caption{\emph{Plot of conditional mean of a Gaussian random variable truncated to $(-\infty, \lambda] \cup [\lambda, \infty)$. Horizontal and vertical black lines correspond to $y = 6$ and represent the observed order statistic and maximum (unconditional) likelihood estimate for the signal size. Solid red curve represents conditional expectation in \eqref{eqCondMLESimple} as function of $\mu$ for $\lambda = 5.8$. Blue and purple dashed lines represent conditional expectations for $\lambda = 5.5$ and $\lambda = 4.5$ respectively. Vertical lines represent maximum (conditional) likelihood signal size estimates for corresponding coloured expectation curves. These are obtained by reading off the horizontal coordinate of intersection between the horizontal black line and relevant expectation curve.}}
  \label{figCondMean}
\end{figure}

Figure~\ref{figCondMean} plots the expectation in \eqref{eqCondMLESimple} as a function of the true signal size $\mu$ for different values of $\lambda$, namely $\lambda = 4.5, 5.5, 5.8$ and $y = 6$. Notice how the maximum conditional likelihood estimate of signal size shrinks the observed $y$ toward zero, with the extent of the shrinkage determined by how close the threshold $\lambda$ is to $y$. The closer to $y$ the $\lambda$; the more the shrinkage. Notice that this shrinkage is not the same as the scaling, translation or keep-or-kill type shrinkages we get from ridge regression, Lasso and $\ell_0$ penalties, respectively. Successful signal size estimates obviously depend greatly on the chosen threshold $\lambda$. Simulation results presented in the next section postulate a seemingly good method for the selection of this threshold. 

Here we have considered the orthogonal case, whereby $Y$ is assumed to have a diagonal covariance matrix. One can extend the theory and obtain estimators for the correlated case as well. The discussion being currently tangential to our interest, we defer this discussion to future work.


\section{A simulation study}\label{sec:simulation}
We ran a simulation study comparing the performance of our truncated Gaussian shrinkage estimator against that of some other candidate estimators. Samples of size $n = 1000$ were generated independently from a $N(\mu, I)$ distribution. Sparsity of the mean vector was controlled via the parameter $\alpha$ so that $\lceil n^\alpha \rceil$ entries were non-zero (the rest: zero). Non-zero entries were generated independently from a univariate $N(\nu, 1)$ distribution. 

Each sample was sorted in decreasing order of (absolute) size of the sample elements. Signal size estimates were computed for each using a variety of methods. Methods were compared via their mean squared error (MSE) over the first $K$ elements \textit{only}:
\[
MSE(K) = \frac{1}{K}\sum_{k = 1}^K\left(\hat{\mu}_{r(k)} - \mu_{r(k)}\right)^2
\]
We consider MSE over only a segment of the underlying signal vector, because we are interested in comparing the performance of the estimators post selection. We know the average MSE for the unconditional likelihood estimator \textit{over the entire vector} is 1 and that it is dominated by the estimator of \citet{JamesStein}. However, after selection, considering only a small segment of the total vector, we have little theory to guide us. Hence the rationale for the simulation. It is known that the hard thresholding estimator does very poorly over the largest of the signals (see Section~\ref{sec:introduction}) and the hope is that some other methods perform better. 

The list of methods considered here comprises of the following:

\begin{enumerate}
	\item Unconditional maximum likelihood estimator: $\hat{\mu}_{r(k)} = y_{r(k)}$. Note that this is equivalent to a hard thresholding (HT) estimator. If we choose $K$ via the $BH(q)$ procedure, this becomes the estimator analysed by \citet{AbramBenDonJohn}.
	\item Soft thresholding (ST) estimator with threshold chosen as $|y|_{(K+1)}$: $\hat{\mu}_{r(k)} = {\rm sign}(y_{r(k)})(|y_{r(k)})| - |y|_{(K+1)})$.
	\item James-Stein shrinkage estimator (JS), shrinking toward zero: $\hat{\mu}^{JS} = \left(1 - \frac{n-2}{||y||^2}\right)y$ and $\hat{\mu}_{r(k)} = \hat{\mu}^{JS}_{r(k)}$.
	\item Truncated Gaussian (TN) shrinkage estimator of Equation~\eqref{eqCondMLE}.
	\item Second-order bootstrap estimator of \citet{SimonSimon2013} (B2).
	\item Oracle estimator of \citet{SimonSimon2013} (B-O). The oracle estimator cannot be used in practice, because it generates parametric bootstrap samples using the true underlying signal vector $\mu$. It is privy to much more useful information than are the other estimators and is a good yardstick against which to measure the relative performance of the other estimators.
	\item Empirical Bayes estimator of \citet{EfronTweedie} where the density is estimated using the non-linear programming technique of \citet{WagerDens} (NLP).
	\item Empirical Bayes thresholding estimator of \citet{JohnSilver2004} (EBT). We use the default settings suggested by the authors. Note that the threshold chosen by this method is not the same as the threshold implicitly suggested by only considering the top $K$ sample elements. We still consider the top $K$ sample elements, even if this method produces some zero mean estimates.
	\item Empirical Bayes estimator with gold standard, generalized maximum likelihood estimator of the density as in \citet{JiangZhangMLDens} (GMLEB).
		
\end{enumerate}

Two types of selection procedures were considered. The first selected the largest $K$ sample elements and deemed them to have non-zero signals. Here $K$ was deterministic and was varied. The second procedure was the Benjamini-Hochberg (BH) procedure of \citet{BH95}, with false discovery rate (FDR) parameter, $q$, varied. The estimators are numbered exactly as in this list in what follows. Notice that estimators 3 and 5-9 provide estimates for the entire signal vector. One cannot apply these methods to a fraction of the vector only. We applied the method, obtaining an estimate of the entire mean vector, but then considered the MSE over only those sample elements selected by the top-$K$ or $BH(q)$, whichever was appropriate.

Some additional estimators were also considered, but not included in the output. One is the first order bootstrap estimator of \citet{SimonSimon2013}. Its performance was always similar to, but slightly poorer than that of the second order bootstrap estimator. Other estimators had very poor MSE performance and, since we cut off  the plots at some maximum MSE, their output did not make it into the final presentation. These include the soft thresholding estimator at the universal threshold, as well as the SURE and adaptive SURE estimators discussed in Section~\ref{sec:thresh_esti}.

Simulation parameters that were varied include:
\begin{itemize}
	\item	Sparsity parameter: $\alpha = 0.1, 0.15, \dots, 0.5$
	\item	Signal strength parameter: $\nu = 3, 4, 5, 6$
	\item Number of sample elements selected in top-$K$ selection procedure: $K = 1, 2, \dots, 30$.
	\item False discovery rate control parameter in the $BH(q)$ procudure: $q = 0.05, 0.1, 0.15, 0.2, 0.3, 0.5$.
\end{itemize}

$S = 55$ simulations were run at each setting of the parameters. For each setting, each of the signal size estimates was computed and its (partial) MSE computed. Some results are presented below.

\subsection{Top-$K$ selection: varying the number selected ($K$)}
In practice, we would not know how many true non-zero signals there are underlying our sample elements. Some of the methods mentioned (e.g. bootstrap of \citet{SimonSimon2013} and the empirical Bayes of \citet{WagerDens}) estimate the entire signal vector, with the user subsequently allowed to select $K$ and consider only those elements in the estimated vector. Others, including the truncated Gaussian shrinkage estimator, only estimate the signals of the \textit{selected} sample elements. Here an appropriate choice of $K$ becomes important. Clearly, the optimal $K$ selected must bear some relation to the sparsity of the underlying signal vector, here controlled by the $\alpha$ parameter . The \textit{smaller} this parameter; the \textit{more} sparse is $\mu$.

\begin{figure}[htb]
	\centering
	\includegraphics[width=120mm]{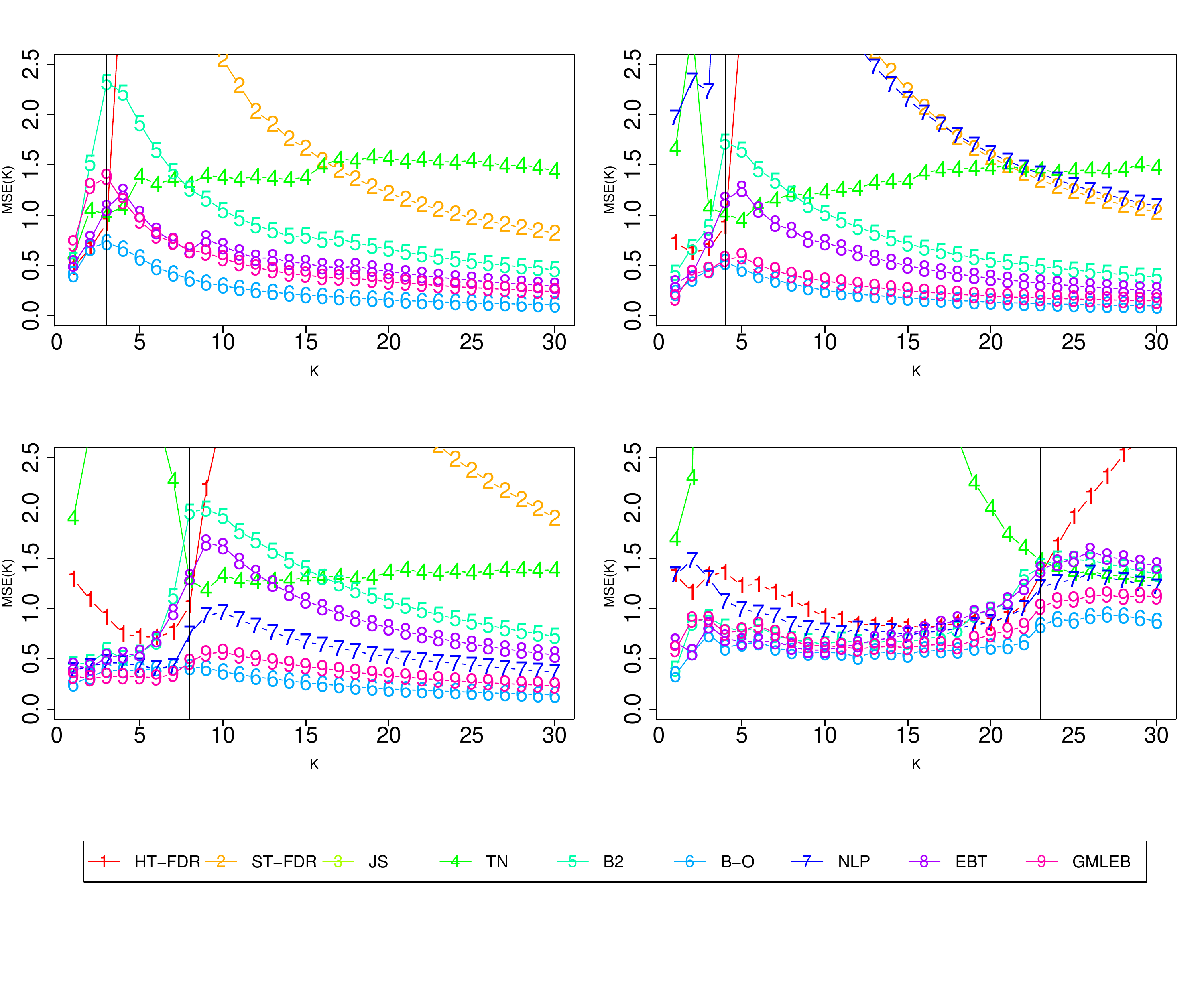}
	\caption{\emph{Median $MSE(K)$ as function of number of deterministically selected sample elements ($K$). Estimators numbered as in the numbered list. Sparsity varies over panels. Top left: $\alpha = 0.15$, top right: $\alpha = .2$, bottom left: $\alpha = 0.3$ and bottom right: $\alpha = 0.45$. Vertical black lines at $\lceil n^\alpha \rceil$ --- the true number of non-zero signals. Figure is truncated above at $MSE(K)=2.5$. Sample size $n = 1000$; signal size $\nu = 6$. Our estimator of Equation~\eqref{eqCondMLE} is labelled `TN' and is in green (number 4). Curve labelled number 6 is the oracle and is not computable in practice.}}
	\label{figVaryK}
\end{figure}

Figure~\ref{figVaryK} plots the median (over the $S = 55$ simulations) partial MSE ($MSE(K)$) as a function of $K$. Each panel shows the curves for a different sparsity setting. Vertical black lines are drawn at $\lceil n^\alpha \rceil$ --- the true number of non-zero signals. 

Hard thresholding and James-Stein estimators (numbered 1 and 3) seem to perform quite poorly, their curves often running off the top of the plot, which is bounded above to facilitate the viewing of the curves of the other methods. Clearly, we do better with other methods.
 
The oracle bootstrap estimator outperforms all others at all values of $K$. This is unsurprising considering the extra information it is privy to. The median MSE curves of this estimator appears \textit{always} to lie below those of the others (even in subsequent figures). One cannot apply this estimator in practice, because it relies on the true mean in the parametric bootstrap step -- the very thing we are trying to estimate. This estimator should not be viewed as a competitor we are looking to beat. It is merely a reference.

Empirical Bayes estimates (EBT and GMLEB) perform well at all sparsity levels. The NLP empirical Bayes estimator performs very poorly in the presence of very sparse signals, but sees a dramatic improvement relative to other methods as the underlying signal becomes less sparse.

It is interesting to consider the behavior of the truncated Gaussian estimator. When $K < \lceil n^\alpha \rceil$ (left of vertical black lines), the method generally does poorly. We are still in the signal variables here and the method obviously does not do well when we use one of the non-zero signal elements as a threshold. Performance improves markedly as $K$ approaches and goes beyond the true number of non-zero signals, after which it competes with the best of the methods over a significant range of $K$. In particular, it would seem that the method slightly outperforms the other methods for values of $K$ just to the right of the true number of non-zero signals, especially for the sparse signal vectors. This phenomenon seems to hold for slightly weaker signals as well. 

One should bear in mind that a good estimator here does not necessarily have to outperform all others at all values of $K$. In practice, we would use some method to select $K$, either by rule-of-thumb or in some more principled way (like that of Benjamini and Hochberg, for example). All we require is for our estimator to be good \textit{over the selected sample elements}, which corresponds to a single $K$ in each of these plots. 

\subsection{Effect of sparsity ($\alpha$) at the true number of signals ($K^\star (\alpha)$)}

As demonstration of this point, suppose we were told the value of $\alpha$. Then we could compute the true number of non-zero signals $K^\star(\alpha) = \lceil n^\alpha \rceil$ and compare the performance of the different estimators at this $K$, varying the sparsity parameter $\alpha$. 

\begin{figure}[htb]
	\centering
	\includegraphics[width = 120mm]{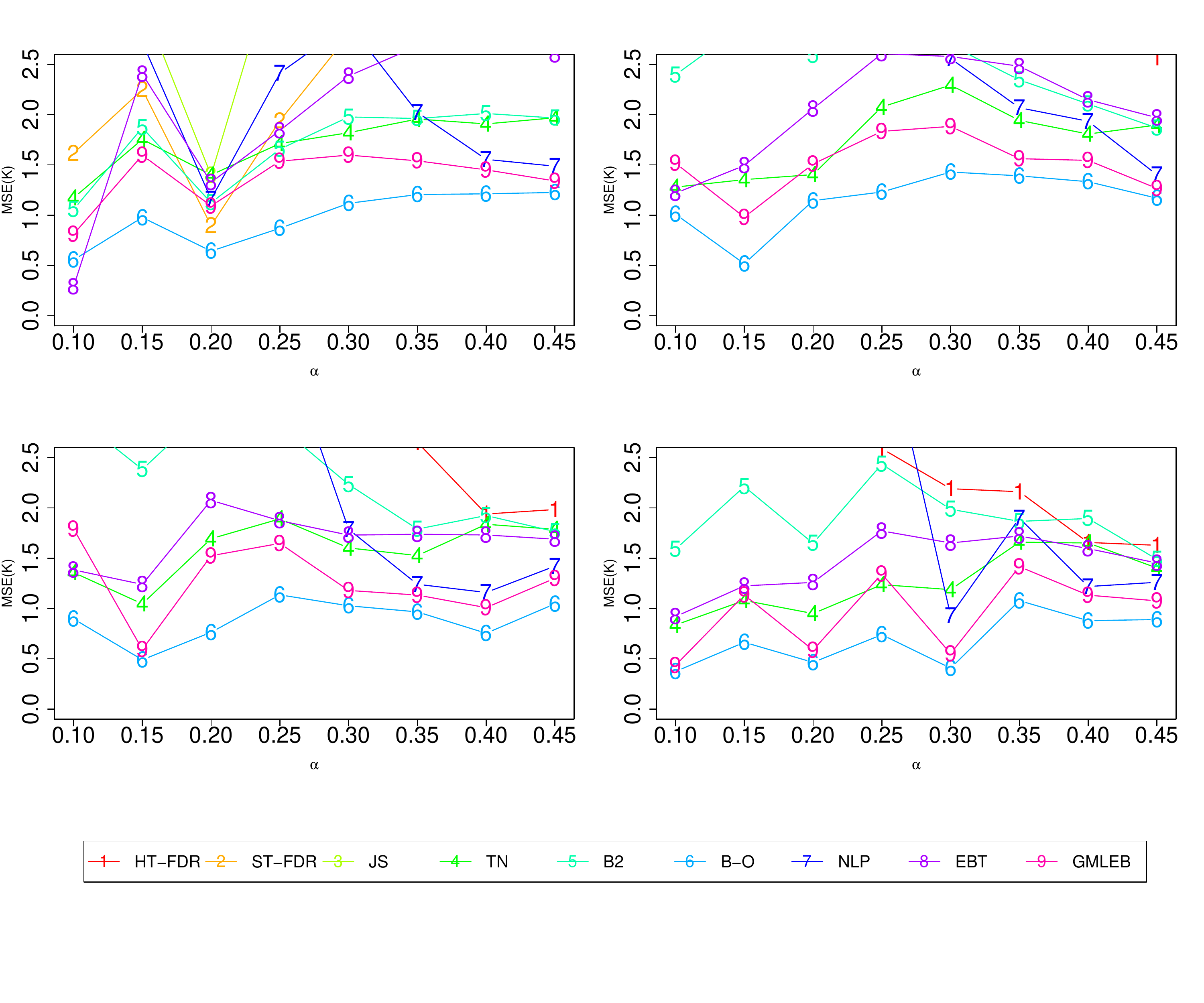}
	\caption{\emph{Median $MSE(K)$ as a function of sparsity parameter ($\alpha$) at the true number of signals ($K^\star (\alpha)$)  plus 1. Curves numbered as in the numbered list. Signal size ($\nu$) varies over panels, increasing from 3 in the top left panel, to 6 in the bottom right. Figure truncated above at $MSE(K) = 2.5$. Sample size $n = 1000$. Our estimator of Equation~\eqref{eqCondMLE} is labelled `TN' and is in green (number 4). Curve labelled number 6 is the oracle and is not computable in practice.}}
	\label{figVaryAlphaTrueK}
\end{figure}

Figure~\ref{figVaryAlphaTrueK} plots the median partial MSE as a function of the sparsity parameter at a $K$ just to the right of $K^\star (\alpha)$. Our truncated Gaussian estimator seems to outperform most others over a fairly broad range of sparsity values at a handful of signal strength settings, competing with (and sometimes improving on) the gold standard GMLEB estimator. It is only once signals become less sparse that the other estimators (like NLP, EBT and HT) catch up. Even then the truncated Gaussian estimator does not perform much more poorly than the best.

Of course, in practice, we are never told the true value of $\alpha$ and we cannot compute $K^\star (\alpha)$. We rely on other methods for determining the $K$ used in further analysis. One such method is the procedure of Benjamini and Hochberg at false discovery rate control level $q$ ($BH(q)$). The hope is that the performance of the truncated Gaussian estimator is still reasonable without the oracle-like knowledge of exactly how many non-zero signals there are. Results from signal size estimation post such selection are detailed next.

\subsection{Selection via Benjamini-Hochberg$(q)$: varying $q$}

\begin{figure}[htb]
	\centering
	\includegraphics[width=130mm]{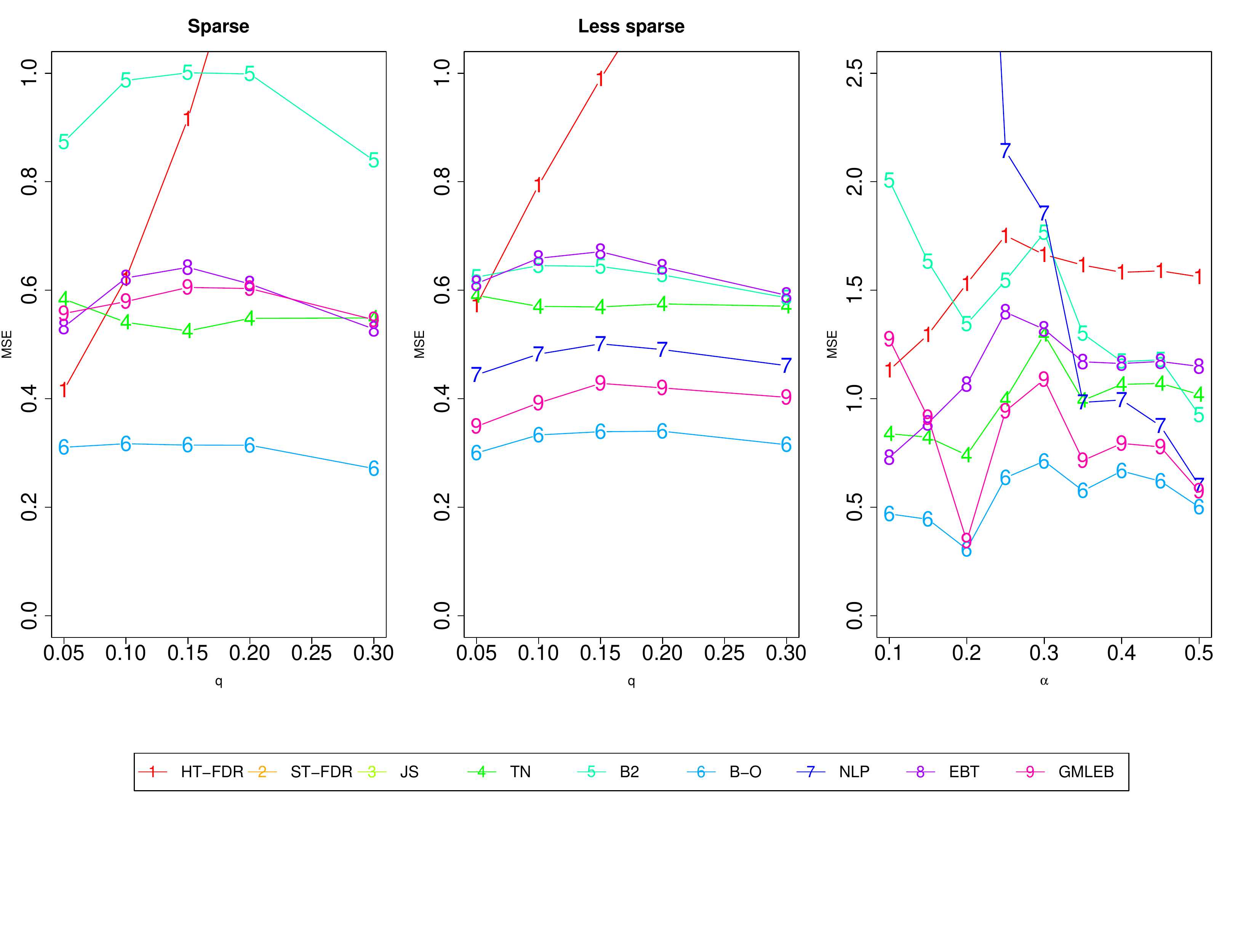}
	\caption{\emph{``Integrated median partial MSE". Curves labeled ``6" are the oracle bootstrap estimators of Simon and Simon (2013). These cannot be computed in practice. Our truncated Gaussian estimator is labelled ``4". It performs best in the sparsest settings. \textbf{Left panel}: Median partial MSE curves as function of $q$ (as in Figure~\ref{figBHVaryQ}; each obtained over $S = 55$ replications) are pointwise-averaged over $\alpha = 0.1, 0.15, 0.25$ -- the sparsest signal vectors -- with the resulting curves plotted as function of $q$. \textbf{Middle panel}: As in the left panel, but with averaging over $\alpha = 0.3, 0.35, \dots, 0.5$ -- less sparse signal vectors. \textbf{Right panel}: Median partial MSE curves as function of $\alpha$ (as in Figure~\ref{figBHVaryAlpha}; each obtained over $S = 55$ replications) are pointwise-averaged over $q = 0.05, 0.1, 0.15, 0.2, 0.3$, with the resulting curves plotted as function of $\alpha$.}}
	\label{fig:integratedMSE}
\end{figure}

The conservatism of the Benjamini-Hochberg procedure is controlled by the false discovery rate limit parameter ($q$). We are guaranteed that the expected ratio of false discoveries (zero signals selected) is bounded above by $q$. This conservatism comes at the cost of not selecting all signal variables. Although we do not have direct control over the number of sample elements as we did in the previous paragraphs (by controlling $K$), we have indirect control via $q$. The higher the $q$; the less conservative the Benjamini-Hochberg procedure and the more sample elements are selected as non-zero each time. 

\begin{figure}[htb]
	\centering
	\includegraphics[width=120mm]{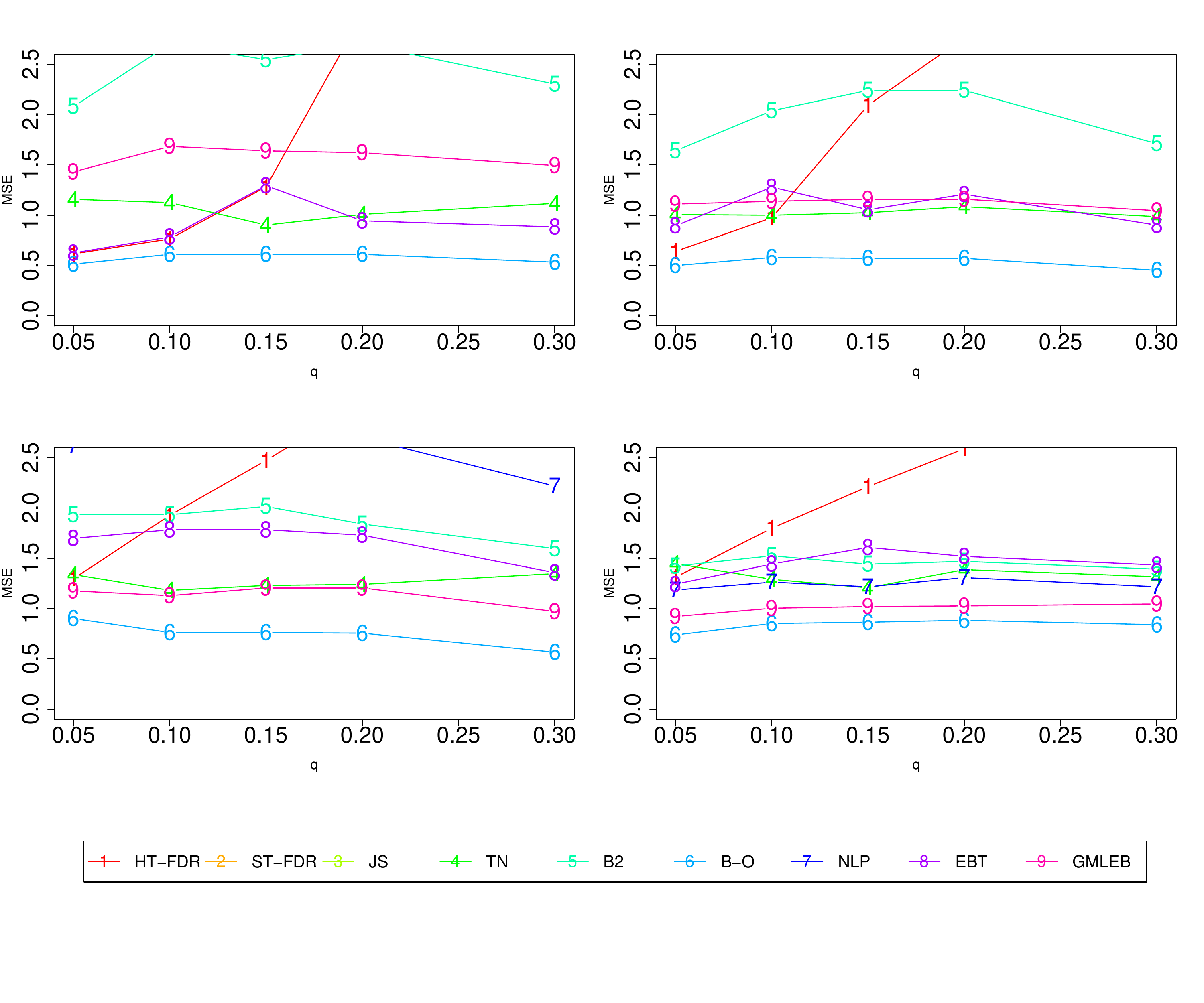}
	\caption{\emph{Median $MSE(K)$ as function of the FDR limit ($q$) of the $BH(q)$ procedure. Curves labelled as in numbered list. Sparsity of signal vector varies over panels. Top left: $\alpha = 0.1$, top right: $\alpha = 0.15$, bottom left: $\alpha = 0.25$ and bottom right: $\alpha = 0.4$. Figure truncated above at $MSE(K)=2.5$. Sample size $n = 1000$, signal size $\nu = 6$. Our estimator of Equation~\eqref{eqCondMLE} is labelled `TN' and is in green (number 4). Curve labelled number 6 is the oracle and is not computable in practice.}}
	\label{figBHVaryQ}
\end{figure}

Figure~\ref{fig:integratedMSE} plots ``integrated median partial MSE" for the different estimators. This figure serves as a summary to the subsequent two figures in this section. Details are given in the caption and later in the section. The upshot is that our truncated Gaussian estimator performs best (and outperforms all contenders, except the oracle, which is not computable in practice) for sparse signals (left panel). The best performance obtains at $q = 0.1, 0.15, 0.2$: FDR control settings we believe most widely used in practice. As signals become less sparse, other estimators perform better, but ours remains in the top three best performers (middle and center panels). Furthermore, the methods producing estimators 7 and 9 are not easily applied to the computation of confidence intervals. Contrast this with our method, which provides a fully parameterized post-selection distribution of the selected sample elements, leading quite simply and naturally to the construction of valid post-selection confidence intervals. We pursue this further in Section~\ref{sec:ci}.

Figure~\ref{figBHVaryQ} plots the median partial MSE as a function of $q$ for different sparsity levels. Notice how the truncated Gaussian estimator performs admirably, competing with  the best estimator at levels $q = 0.1$, $q = 0.15$ and $q = 0.25$. Although it is never the outright best (except maybe at a few points), the other competitors' performance vary according to the sparsity settings. For example, the B2 and NLP estimators do poorly at sparse signals and improve as signals become less sparse, although they never outperform our TN estimator. The GMLEB estimator improves as sparsity decreases, settling into pole position when $\alpha = 0.45$. Finally, the EBT estimator competes with our TN estimator at high sparsity levels, with the latter improving relatively as sparsity decreases. This is heartening, since these are three settings for $q$ that are likely to be used in practice. It seems to perform best at moderate sparsity levels.

The NLP empirical Bayes estimator performs poorly in the sparsest cases, probably because it requires moderately dense signals to produce decent density estimates. Its performance improves markedly as sparsity decreases. Hard thresholding, James-Stein and soft thresholding perform very poorly at all levels of sparsity.

Of course, in practice we would not try many different settings of $q$, settling rather on one and hoping that our signal size estimator produces decent estimates \textit{at all or many} levels of sparsity. From Figure~\ref{figBHVaryQ} it seems as though the levels $q = 0.1$, $q = 0.15$ and $q = 0.2$ could be promising levels.

\begin{figure}[htb]
	\centering
	\includegraphics[width=120mm]{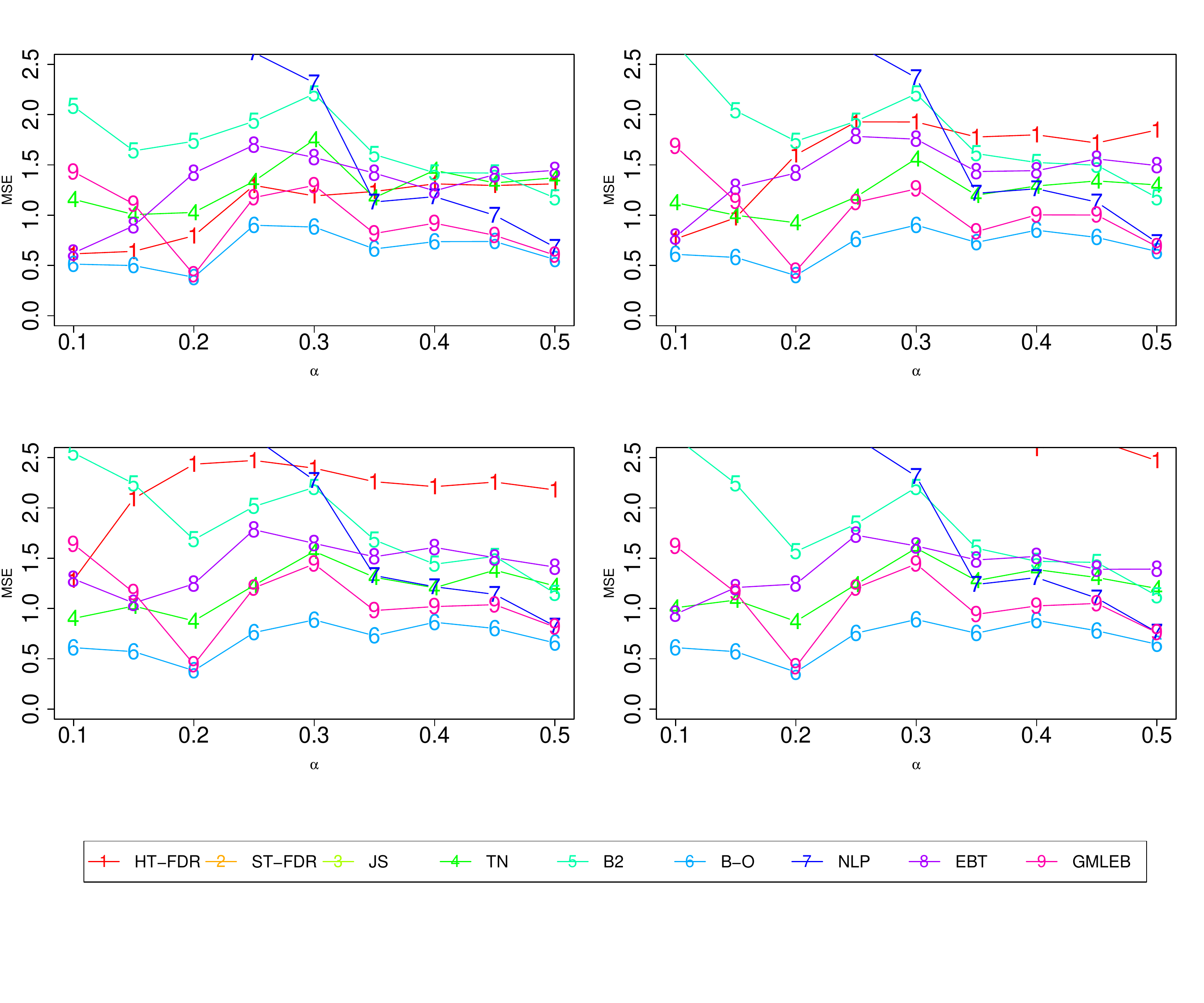}
	\caption{\emph{Median $MSE(K)$ as function of sparsity level for the $BH(q)$ selection procedure. The value of $q$ varies over the panels. Top left: $q = 0.05$, top right: $q = 0.1$, bottom left: $q = 0.15$ and bottom right: $q = 0.2$. Figure truncated above at $MSE(K) = 3$. Sample size $n = 1000$, fixed signal size $\nu = 6$. Our estimator of Equation~\eqref{eqCondMLE} is labelled `TN' and is in green. Curve labelled number 6 is the oracle and is not computable in practice.}}
	\label{figBHVaryAlpha}
\end{figure}

Figure~\ref{figBHVaryAlpha} plots median partial MSE against sparsity at different levels of $q$. The truncated Gaussian estimator competes at most levels of sparsity in just about every panel. In particular, the panels corresponding to $q = 0.1$, $q = 0.15$ and $q = 0.2$ suggest that this procedure is in the top two or three (non-oracle) performers over most sparsity levels. The truncated Gaussian post-selection signal size estimator seems to perform well in conjunction with the Benjamini-Hochberg procedure for $q = 0.1$, $q = 0.15$ and $q = 0.2$. 

The good performance of the hard thresholding estimator (red curves, labelled 1) in parts of Figure~\ref{figBHVaryQ} and the top panels of Figure~\ref{figBHVaryAlpha} may initially seem contradictory, especially considering the evidence of Figure~\ref{figWinnersCurse}. Does the winner's curse not dictate that this estimator perform worst with respect to partial MSE over the first few signals? The apparent conundrum can be resolved when one considers those sample elements with large underlying signals as a separate subsample from those without. The signal elements form a small sample --- one in which the winner's curse (as evinced by large positive errors) is less prominent. For the handful of signal elements, the (absolutely) sorted vector of sample elements may actually serve as decent estimates to the underlying signals. 

For sparse signals then, we have a small subsample of signal elements and, for low $q$, the conservatism inherent in the FDR control guarantee of the Benjamini-Hochberg procedure ensures that only some of these signal elements are selected (and few or none of the zero-signal elements). Computing partial MSE only over these few sample elements leads to fairly small error values, as seen in the figures. As we increase $q$, however, our Benjamini-Hochberg procedure becomes less conservative and it is allowed to select some true zero-signal elements as well. It tends to select the largest of these sample elements, leading to a rapidly increasing partial MSE as they are included in the calculation. Hence the shape of the curves in Figure~\ref{figBHVaryQ}. A combination of signal sparsity and small $q$ tend to work together to keep partial MSE contained. Notice however, that the post-selection truncated Gaussian estimator does not suffer as we increase $q$ or decrease sparsity. It manages to defuse the explosive effect of the first few zero-signal sample elements selected.

Since the estimators seem to exhibit varying performance over the different parameter settings, it might be revealing to amalgamate the information in the last two plots. To this end, we define an ``integrated median MSE", which is merely the pointwise, arithmetic mean of the median partial MSE curves, a sample of which have been shown in the panels of the preceding two plots. We expect the truncated Gaussian estimator to perform reasonably well here, because of its consistent performance over a range of parameter settings. This is shown in Figure~\ref{fig:integratedMSE}. 

Indeed, it seems to be the case that the truncated Gaussian estimator performs admirably, settling into first place amongst those estimators actually computable in practice (for sparsest signals) and third for less sparse signals. These plots provide a good summary of the most notable performance trends: the TN, EBT and GMLEB estimators perform consistently over all panels; hard thresholding performs poorly when the FDR control parameter is increased and performs more poorly overall than TN, EBT and GMLEB; B2 and NLP require signals to be less sparse before becoming effective.

Overall, it would seem that our truncated Gaussian estimator performs admirably. The reader should not be misled by the figures in this section and discount its performance. We have shown only the best performing estimators, truncating our figure panels to focus exclusively on the area inhabited by these good estimators. Even though it does not systematically outperform \textit{all} competitors, it is the most consistent of the bunch, giving reasonably accurate estimates of the selected signals at large to moderate sparsity levels and FDR parameter values likely to be used most in practice.

Recall also that out method lends itself naturally and seamlessly to the construction of valid post-selection confidence intervals. We discuss this further in Section~\ref{sec:ci}. This property is not as simply embodied in, for example, the procedure leading to estimator GMLEB.

\section{An asymptotic upper bound on worst case risk on sparsity balls}\label{sec:risk}
In the previous section we saw that the truncated Gaussian estimator with FDR-selected threshold performed admirably, always competing within the top three estimators in the sparse setups encountered in the simulation. The particular recipe -- first selecting non-zero signals via $BH(q)$ and then estimating their signal sizes -- recalls the estimators analysed in \citet{AbramBenDonJohn} and \citet{JiangZhangFDRST}. It is natural to ask whether, and to what extent, the asymptotic results encountered there, pertaining to asymptotic minimax risk on parameter spaces meant to evoke sparsity, transfer to our estimator. It is very similar in spirit, after all.

Suppose we have the setup as in (\ref{eqManyMeansModel}), with $\sigma^2 = 1$, having observed data $y_1, y_2, \dots, y_n$. Defining notation, let $\hat{\mu}_{t, i}^{HT} = \hat{\mu}_{t}^{HT}(y_i) = y_iI\{|y_i| > t\}$ be the hard thresholding estimator for $\mu_i$ based on threshold $t$. Similarly, let $\hat{\mu}^{ST}_{t, i} = \hat{\mu}_{t}^{ST}(y_i) = {\rm sign}(y_i)\max\{|y_i| - t, 0\}$ and $\hat{\mu}^{TN}_{t, i} = \hat{\mu}_{t}^{TN}(y_i) = \tilde{\mu}_iI\{|y_i| > t\}$ be the soft thresholding and truncated Gaussian estimators for $\mu_i$ using the same threshold $t$. Here $\tilde{\mu}_i$ is obtained by solving (\ref{eqCondMLESimple}) for $\mu$ after setting $y = y_i$ and $\lambda = t$.

Furthermore, fix FDR parameter $q \in (0, 1)$ and let $t_k = \tilde{\Phi}^{-1}(q/2\cdot k/n)$ be the right tailed standard Gaussian quantiles (here $\tilde{\Phi}(\cdot) = 1 - \Phi(\cdot)$, with $\Phi(\cdot)$ the standard Gaussian cumulative distribution function). The FDR controlling threshold separating nulls from non-nulls is $t_{\hat{k}_F} = \hat{t}_F$ where $\hat{k}_F$ is the largest index $k$ for which $|y|_{(k)} \geq t_k$, as in the $BH(q)$ procedure.

\citet{AbramBenDonJohn} and \citet{JiangZhangFDRST} consider the worst case risk of an estimator $\hat{\mu}$ over a specified parameter space $\Theta_n$:
\begin{equation}\label{eq:worst_case_risk}
 \tilde{\rho}(\hat{\mu}, \Theta_n) = \sup_{\Theta_n} E_\mu ||\hat{\mu} - \mu||^r_r,
 \end{equation}
 $0 < r \leq 2$. In particular, they study the behaviour of the estimators $\hat{\mu}^{HT}_{\hat{t}_F}$ and $\hat{\mu}^{ST}_{\hat{t}_F}$ comprising of components $\hat{\mu}^{HT}_{\hat{t}_F, i}$ and $\hat{\mu}^{ST}_{\hat{t}_F, i}$ respectively over three types of parameter spaces, each meant to evoke sparsity in the mean vector. Spaces considered are:
\begin{enumerate}
		\item $\Theta_n = \ell_0[\eta_n] = \{\mu : ||\mu||_0 \leq \eta_n \cdot n\}$ --- the ``nearly black" case, where we limit the proportion of non-zero signals ($\eta_n$). Note how this proportion can depend on $n$.
		\item $\Theta_n = m_p[\eta_n] = \{\mu : |\mu|_{(k)} \leq C\cdot \eta_n \cdot n^{1/p} \cdot k^{-1/p}\}$ --- the ``weak $\ell_p$ ball".
		\item $\Theta_n = \ell_p[\eta_n] = \{\mu : \frac{1}{n}\sum_{i = 1}^n |\mu_i|^p \leq \eta_n^p\}$ --- the ``strong $\ell_p$ ball".
\end{enumerate}

The worst case risk $\tilde{\rho}(\hat{\mu}^{HT}_{\hat{t}_F}, \Theta_n)$ is compared to the minimax risk
\[
R_n(\Theta_n) = \inf_{\hat{\mu}}\,\tilde{\rho}(\hat{\mu}, \Theta_n)
\]
with an asymptotic upper bound for the former written in terms of the latter. We wish to adapt these results to our estimator $\hat{\mu}^{TN}_{\hat{t}_F}$. Note that this is not exactly the estimator considered in the simulation study. That was $\hat{\mu}^{TN}_{|y|_{\hat{k}_F}}$. However, the former is more amenable to analysis and is close to the latter.

We note that, for $0 < r \leq 2$:
\begin{align*}
	||\hat{\mu}^{TN}_{\hat{t}_F} - \mu||^r_r &= ||\hat{\mu}^{TN}_{\hat{t}_F} - \hat{\mu}^{HT}_{\hat{t}_F} + \hat{\mu}^{HT}_{\hat{t}_F} - \mu||^r_r \\
	&\leq 2^{(r-1)_+}\left[||\hat{\mu}^{TN}_{\hat{t}_F} - \hat{\mu}^{HT}_{\hat{t}_F}||^r_r + ||\hat{\mu}^{HT}_{\hat{t}_F} - \mu||^r_r \right] \\
	&\leq 2^{(r-1)_+}\left[||\hat{\mu}^{ST}_{\hat{t}_F} - \hat{\mu}^{HT}_{\hat{t}_F}||^r_r + ||\hat{\mu}^{HT}_{\hat{t}_F} - \mu||^r_r \right] \\
	&= 2^{(r-1)_+}\left[\hat{k}_F\hat{t}_F^r + ||\hat{\mu}^{HT}_{\hat{t}_F} - \mu||^r_r \right]
\end{align*}
where the second inequality follows from Lemma~\ref{lemma:squeeze}. Hence
\begin{equation}\label{eq:risk_bound}
	E||\hat{\mu}^{TN}_{\hat{t}_F} - \mu||^r_r \leq 2^{(r-1)_+}\left[E[\hat{k}_F\hat{t}_F^r] + E||\hat{\mu}^{HT}_{\hat{t}_F} - \mu||^r_r \right]
\end{equation}
This puts the quantity in an amenable form. Furthermore, since, also from Lemma~\ref{lemma:squeeze},
\[
	||\hat{\mu}^{TN}_{\hat{t}_F} - \mu||^2_2 \leq \max\{||\hat{\mu}^{HT}_{\hat{t}_F} - \mu||^2_2, ||\hat{\mu}^{ST}_{\hat{t}_F} - \mu||^2_2\} \leq ||\hat{\mu}^{HT}_{\hat{t}_F} - \mu||^2_2 + ||\hat{\mu}^{ST}_{\hat{t}_F} - \mu||^2_2
\]
we have
\begin{equation} \label{eq:risk_bound2}
E||\hat{\mu}^{TN}_{\hat{t}_F} - \mu||^2_2 \leq E||\hat{\mu}^{HT}_{\hat{t}_F} - \mu||^2_2 + E||\hat{\mu}^{ST}_{\hat{t}_F} - \mu||^2_2
\end{equation}

We can leverage the results of \citet{AbramBenDonJohn} and \cite{JiangZhangFDRST} and state the following theorem, essentially stated as they do in the in the former paper, but tweaked slightly to apply to our estimator:
\begin{theorem}\label{theo:risk}
		Let $y \sim N_n(\mu, I)$. Consider the estimator $\hat{\mu}^{TN}_{\hat{t}_F}$ applied with an FDR control parameter $q_n$, which may depend on $n$, but has some limit $q \in [0, 1)$. In addition, suppose $q_n \geq \gamma/\log(n)$ for some $\gamma > 0$ and all $n \geq 1$.
		
		Use the $\ell_r$ risk measure (\ref{eq:worst_case_risk}) where $0 \leq p < r \leq 2$. Let $\Theta_n$ be one of the parameter spaces detailed above with $\eta^p_n \in [\frac{\log^5(n)}{n}, n^{-\delta}]$, $\delta > 0$. Then as $n \rightarrow \infty$.
		\[	
			\sup_{\mu \in \Theta_n} \rho(\hat{\mu}^{TN}_{\hat{t}_F}, \mu) \leq 2^{w_{rp}}R_n(\Theta_n)\left[v_{rp} + u_{rp}\frac{(2q - 1)_+}{1 - q} + o(1)\right]
		\]
		where $w_{rp} = (r-1)_+$ for $0 \leq r < 2$ and $w_{2p} = 0$. Furthermore, $v_{rp} = 1 + \frac{u_{rp}}{1-q}$ for $0\leq r < 2$ and $v_{2p} = 2$ with $u_{rp} = 1$ and $u_{rp} = 1 - (p/r)$ for strong and weak $\ell_p$ balls respectively and $(x)_+ = \max\{x, 0\}$.
\end{theorem}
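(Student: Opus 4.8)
The plan is to reduce everything to known asymptotic risk bounds for the \emph{hard}- and \emph{soft}-thresholding FDR estimators — supplied by \citet{AbramBenDonJohn} and \citet{JiangZhangFDRST} — through the deterministic sandwiching already recorded in (\ref{eq:risk_bound}) and (\ref{eq:risk_bound2}). Both of those inequalities are consequences of Lemma~\ref{lemma:squeeze}, which is a coordinatewise statement (each retained coordinate of $\hat\mu^{TN}_{\hat t_F}$ lies between $\hat\mu^{HT}_{\hat t_F}$ and $\hat\mu^{ST}_{\hat t_F}$), hence hold pointwise in $y$; so applying $E_\mu$ and then $\sup_{\mu\in\Theta_n}$ costs nothing. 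The theorem then follows once we control, uniformly over $\Theta_n$ and relative to $R_n(\Theta_n)$, the three quantities $E_\mu\|\hat\mu^{HT}_{\hat t_F}-\mu\|^r_r$, $E_\mu\|\hat\mu^{ST}_{\hat t_F}-\mu\|^r_r$ and $E_\mu[\hat k_F\hat t_F^r]$.

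For $0\le r<2$, start from (\ref{eq:risk_bound}). For the hard-thresholding term I would quote the asymptotic-minimaxity theorem of \citet{AbramBenDonJohn}: under the stated hypotheses ($q_n\ge\gamma/\log n$, $q_n\to q\in[0,1)$, $\eta_n^p\in[\log^5 n/n,\,n^{-\delta}]$) and for each of the three balls, $\sup_{\mu\in\Theta_n}E_\mu\|\hat\mu^{HT}_{\hat t_F}-\mu\|^r_r\le R_n(\Theta_n)\bigl[1+u_{rp}\tfrac{(2q-1)_+}{1-q}+o(1)\bigr]$, with $u_{rp}=1$ for the strong $\ell_p$ ball and the $\ell_0$ case ($p=0$) and $u_{rp}=1-p/r$ for the weak $\ell_p$ ball, the penalty being active only for $q>1/2$. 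For the second term, $\hat k_F\hat t_F^r$ is exactly the coordinatewise shrinkage gap $\|\hat\mu^{ST}_{\hat t_F}-\hat\mu^{HT}_{\hat t_F}\|^r_r$ (each retained coordinate is displaced by precisely $\hat t_F$), so the auxiliary lemmas in \citet{AbramBenDonJohn} and \citet{JiangZhangFDRST} that bound the expected number of rejections times a power of the threshold give $\sup_{\mu\in\Theta_n}E_\mu[\hat k_F\hat t_F^r]\le R_n(\Theta_n)\bigl[\tfrac{u_{rp}}{1-q}+o(1)\bigr]$. Adding the two bounds and carrying the constant $2^{(r-1)_+}=2^{w_{rp}}$ from (\ref{eq:risk_bound}) yields $\sup_{\mu\in\Theta_n}\rho(\hat\mu^{TN}_{\hat t_F},\mu)\le 2^{w_{rp}}R_n(\Theta_n)\bigl[v_{rp}+u_{rp}\tfrac{(2q-1)_+}{1-q}+o(1)\bigr]$ with $v_{rp}=1+u_{rp}/(1-q)$, exactly as asserted.

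For $r=2$ I would instead use (\ref{eq:risk_bound2}), which avoids the $2^{(r-1)_+}$ blow-up: $\sup_{\Theta_n}E_\mu\|\hat\mu^{TN}_{\hat t_F}-\mu\|^2_2\le\sup_{\Theta_n}E_\mu\|\hat\mu^{HT}_{\hat t_F}-\mu\|^2_2+\sup_{\Theta_n}E_\mu\|\hat\mu^{ST}_{\hat t_F}-\mu\|^2_2$; the $\ell_2$ \citet{AbramBenDonJohn} bound applied to the first summand and the \citet{JiangZhangFDRST} bound to the second each give $R_n(\Theta_n)(1+o(1))$ up to the appropriate share of $u_{2p}(2q-1)_+/(1-q)$, and summing produces $v_{2p}=2$ and prefactor $2^{w_{2p}}=1$. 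The only genuine work — and the main obstacle — is not in our argument but in \emph{matching the hypotheses and constants} of the imported results: verifying that the FDR threshold $\hat t_F=t_{\hat k_F}$ used here coincides with theirs; that the stated ranges for $q_n$ and $\eta_n^p$ are precisely those over which their asymptotics are uniform, with the $o(1)$ remainders uniform over the ball and consistent with the definition of $R_n(\Theta_n)$ used for that ball; that $E_\mu[\hat k_F\hat t_F^r]$ is exactly the quantity their rejection-count lemmas control (rather than a slightly different count of coordinates above $\hat t_F$); and that the range $0\le p<r\le 2$ together with the three cases reproduces exactly the advertised $u_{rp}$, $v_{rp}$, $w_{rp}$. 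Once those identifications are settled the inequalities chain together mechanically.
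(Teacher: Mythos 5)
Your proposal is correct and follows essentially the same route as the paper: it starts from the decompositions (\ref{eq:risk_bound}) and (\ref{eq:risk_bound2}) derived from Lemma~\ref{lemma:squeeze}, quotes the \citet{AbramBenDonJohn} bound for the hard-thresholding term, controls $E[\hat{k}_F\hat{t}_F^r]$ by $\frac{u_{rp}}{1-q}R_n(\Theta_n)(1+o(1))$ (which the paper obtains via the $\Theta_n$-likely sandwich $\hat{k}_F \leq k_+(\mu) \leq \kappa_n$ and the monotonicity of $k \mapsto k t_k^r$), and for $r=2$ invokes Theorem~4 of \citet{JiangZhangFDRST} for the soft-thresholding term. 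The bookkeeping issues you flag (matching thresholds, uniformity of the $o(1)$ terms, and the case-by-case constants) are exactly where the paper, too, defers to the imported results.
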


The theorem follows fairly simply from the bounds in (\ref{eq:risk_bound}) and (\ref{eq:risk_bound2}) and the analysis in \citet{AbramBenDonJohn} and \cite{JiangZhangFDRST}. The full discussion and proof is deferred to Appendix~\ref{appendix:proof_theo_risk}. An important lemma necessary for the lower bound is
\begin{lemma}\label{lemma:squeeze}
Consider the single component estimators $\hat{\mu}^{HT}_t(y)$, $\hat{\mu}^{ST}_t(y)$ and $\hat{\mu}^{TN}_t(y)$. Then,
		\[
			|\hat{\mu}^{HT}_t(y)| \geq |\hat{\mu}^{TN}_t(y)| \geq |\hat{\mu}^{ST}_t(y)|
		\]
		for all $y$, $t \geq 0$.
\end{lemma}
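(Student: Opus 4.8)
First, I would reduce everything to the case $y \geq 0$. The three estimators $\hat\mu^{HT}_t$, $\hat\mu^{ST}_t$, $\hat\mu^{TN}_t$ are all odd functions of $y$ (for the truncated Gaussian one this is because the defining equation \eqref{eqCondMLESimple} is symmetric under $(y,\mu)\mapsto(-y,-\mu)$), so it suffices to prove $|\hat\mu^{HT}_t(y)| \geq |\hat\mu^{TN}_t(y)| \geq |\hat\mu^{ST}_t(y)|$ for $y \geq 0$, and I will treat separately the trivial regime $0 \le y \le t$ (where all three estimators vanish, using the convention that the TN estimator is also thresholded to zero, consistent with the definition $\hat\mu^{TN}_{t,i} = \tilde\mu_i I\{|y_i|>t\}$ in Section~\ref{sec:risk}) and the regime $y > t$, where $\hat\mu^{HT}_t(y) = y$ and $\hat\mu^{ST}_t(y) = y - t$, both nonnegative, so the absolute values can be dropped.

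For the regime $y > t \ge 0$, the key object is the function
\[
g(\mu) \;=\; \mu + \frac{\phi(t-\mu) - \phi(t+\mu)}{\Phi(-t-\mu) + 1 - \Phi(t-\mu)},
\]
which is the conditional expectation of a $N(\mu,1)$ variable truncated to $(-\infty,-t]\cup[t,\infty)$; the TN estimate $\tilde\mu = \hat\mu^{TN}_t(y)$ is the solution of $g(\mu) = y$. The plan is: (i) show $g$ is continuous and strictly increasing in $\mu$, so the solution exists and is unique and $\tilde\mu$ is a monotone function of $y$; (ii) establish the two inequalities $y - t \le \tilde\mu \le y$ by evaluating the correction term $h(\mu) := g(\mu) - \mu$ at the candidate endpoints. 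Concretely, plugging $\mu = y$ into $g$ gives $g(y) = y + h(y)$, and since $h(\mu) = E[Z \mid |Z| \ge t] $ for $Z \sim N(\mu,1)$ minus $\mu$, one checks $h(\mu) > 0$ for $\mu \ge 0$ (the truncation removes a symmetric-ish band around zero but, for $\mu>0$, cuts more probability mass from the left tail region near $-t$ than from the right, pushing the conditional mean up); hence $g(y) > y$, and by monotonicity $\tilde\mu < y$, i.e.\ $\hat\mu^{TN}_t(y) < \hat\mu^{HT}_t(y)$. For the other side, plug $\mu = y - t$: one wants $g(y-t) \le y$, i.e.\ $h(y-t) \le t$; since $h(\mu)$ is the expected amount by which conditioning on $|Z|\ge t$ inflates the mean, and this inflation is bounded by $t$ (intuitively the conditional mean cannot exceed $\mu$ by more than the threshold distance — one can see this from the explicit bound $\phi(t-\mu) \le \phi(0)$ and $\Phi(-t-\mu)+1-\Phi(t-\mu) \ge$ the mass on $[t,\infty)$, combined with a mean-value / tail estimate), so $g(y-t) \le y$ and monotonicity gives $\tilde\mu \ge y - t = \hat\mu^{ST}_t(y)$.

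The main obstacle is step (ii), and specifically the clean verification of $0 \le h(\mu) \le t$ for all $\mu \ge 0$, $t \ge 0$: the upper bound $h(\mu) \le t$ is the delicate one, because naive tail bounds on $\phi$ and $\Phi$ are not tight enough near $\mu \approx t$. I expect the right tool is to write $h(\mu) = \big(\int_t^\infty (z-\mu)\phi(z-\mu)\,dz + \int_{-\infty}^{-t}(z-\mu)\phi(z-\mu)\,dz\big)\big/\big(\Phi(-t-\mu)+1-\Phi(t-\mu)\big)$, recognize the numerator as $\phi(t-\mu) - \phi(t+\mu)$ (by $\int_a^\infty u\,\phi(u)\,du = \phi(a)$), and then bound the ratio using the standard Mills-ratio inequality $1 - \Phi(x) \le \phi(x)/x$ for $x>0$ applied to the dominant denominator term, after splitting into the cases $\mu \le t$ and $\mu > t$. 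Monotonicity of $g$ in step (i) should follow either from differentiating (the derivative of the truncated-Gaussian log-likelihood is negative, i.e.\ the Fisher-information-type identity $g'(\mu) = \mathrm{Var}(Z \mid |Z| \ge t) > 0$) or, more cleanly, from the fact that the truncated Gaussian family is an exponential family in $\mu$ with the identity as sufficient statistic, whose mean is strictly increasing in the natural parameter. Once (i) and (ii) are in hand, the lemma for general $y$ follows by the oddness reduction, and the two chained inequalities combine to give $|\hat\mu^{HT}_t(y)| \ge |\hat\mu^{TN}_t(y)| \ge |\hat\mu^{ST}_t(y)|$.
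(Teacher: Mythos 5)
Your overall architecture coincides with the paper's: establish that the truncated-Gaussian conditional mean $g(\mu)=E[Z\mid |Z|\ge t]$, $Z\sim N(\mu,1)$, is increasing in $\mu$ (the paper proves exactly your step (i), via the identity $g'(\mu)=\mathrm{Var}(Z\mid |Z|\ge t)\ge 0$), and then sandwich $\tilde\mu$ by checking $g(y)\ge y$ and $g(y-t)\le y$. Your HT half is complete, since $\phi(t-\mu)-\phi(t+\mu)\ge 0$ for $\mu\ge 0$ gives $g(y)\ge y$. The gap is in the ST half. You correctly reduce it to showing $h(\epsilon)\le t$ with $\epsilon=y-t>0$, where $h(\mu)=\bigl(\phi(t-\mu)-\phi(t+\mu)\bigr)/\bigl(1-\Phi(t-\mu)+\Phi(-t-\mu)\bigr)$, but you only \emph{propose} to prove this via Mills-ratio bounds after a case split, and you yourself observe that naive tail bounds are not tight enough near $\mu\approx t$. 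That inequality is the entire content of the lower half of the lemma, so as written the argument is incomplete: no bound is actually derived, and it is not evident that the Mills-ratio route closes (the one-sided inequality $1-\Phi(x)\le \phi(x)/x$ does not control the two-sided denominator uniformly in $t$ and $\epsilon$, and the ratio degenerates as $\epsilon\to 0$).

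The paper closes this step with a different, self-contained comparison that you could adopt. Write the denominator as $\bigl[\Phi(\epsilon-t)-\Phi(-\epsilon-t)\bigr]+2\Phi(-\epsilon-t)$ and discard the nonnegative term $2\Phi(-\epsilon-t)$; since the numerator $\phi(\epsilon-t)-\phi(-\epsilon-t)\ge 0$, shrinking the denominator only increases the ratio, so
\[
h(\epsilon)\;\le\;\frac{\phi(\epsilon-t)-\phi(-\epsilon-t)}{\Phi(\epsilon-t)-\Phi(-\epsilon-t)}\;=\;t-E\bigl[W\mid |W|\le \epsilon\bigr],\qquad W\sim N(t,1),
\]
i.e.\ $t$ minus the mean of a Gaussian with location $t$ truncated to the \emph{inner} interval $[-\epsilon,\epsilon]$. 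The variance identity you invoke in step (i) applies verbatim to the inner truncation as well, so that inner conditional mean is nondecreasing in the location parameter and vanishes at location $0$ by symmetry; hence it is nonnegative for $t\ge 0$ and $h(\epsilon)\le t$ follows. With that substitution your argument becomes a complete proof and is then essentially the paper's.
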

The proof of this lemma is detailed in Appendix~\ref{appendix:proof_lemma_squeeze}. It states that the value of the truncated Gaussian estimator is always squeezed in-between those of the hard and soft thresholding estimators. See Figure~\ref{fig:squeeze}. This allows us to bound the absolute difference between the TN and HT estimators above by the absolute difference in the HT and ST estimators, which is just $t$, the value of the threshold.

\begin{figure}[htb]
		\centering
		\includegraphics[width=80mm]{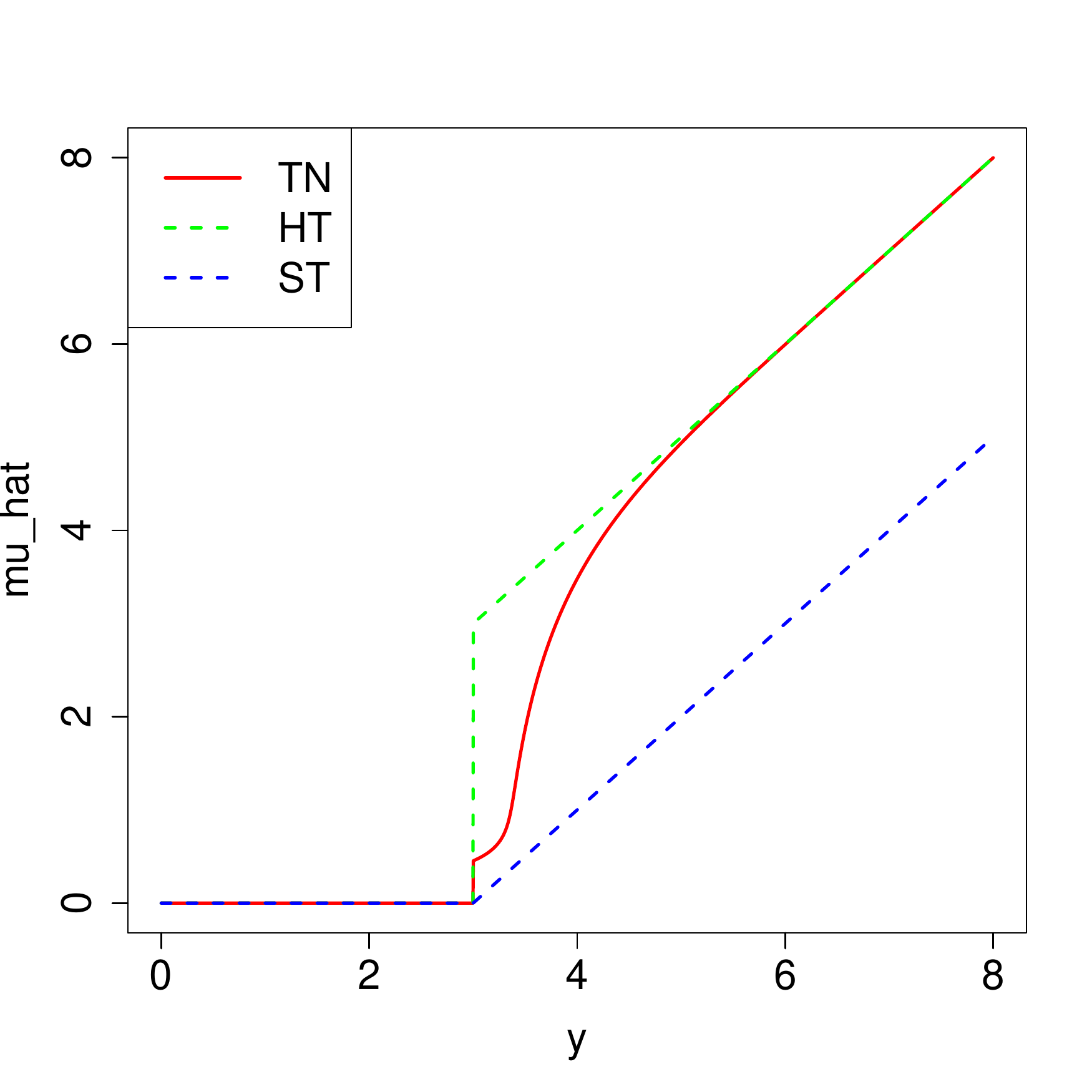}
		\caption{\emph{Comparison of the HT, ST and TN estimators. Horizontal axis measures the size of the sample element $y$, while the vertical axis plots the estimator value $\hat{\mu}$. Each is evaluated at the threshold value $t = 3$. Notice how the TN estimator lies everywhere between the HT and ST estimators. This holds for all $y$ and $t$, as is proven in Lemma~\ref{lemma:squeeze}.}}
		\label{fig:squeeze}
\end{figure}

We note that the bound is not tight. There is considerable scope for improvement. However, we have shown that the worst case risk of our estimator over sparsity balls is asymptotically bounded above by some constant of the minimax risk. This, coupled with the decent finite sample, partial MSE performance as exhibited in Section~\ref{sec:simulation}, makes the use of our estimator rather compelling. We note that the behaviour of the hard and soft thresholding estimators, for which we have tight asymptotic results, is rather poor compared to our estimator. We believe the upper bound of Theorem~\ref{theo:risk} can be tightened, probably to the degree of the bound in \citet{AbramBenDonJohn} and \cite{JiangZhangFDRST}, especially since our estimator is squeezed between the two estimators analyzed there. This endeavour, however, requires careful attention to minutiae and is postponed to future work.


\section{Confidence Intervals}\label{sec:ci}
We need not content ourselves merely with point estimates of the underlying signals. The post-selection, truncated Gaussian framework is quite readily extended to produce confidence intervals (CIs). Indeed, this seems to be the original intention of \citet{LeeSun2TaylorPostSel}.

To remind the reader: we apply some selection procedure (say $BH(q)$), which determines the set of variables, $E$, earmarked for further consideration. Previous sections dealt with obtaining estimates $\hat{\mu}_i$ for the signal $\mu_i$ underlying $y_i$, $i \in E$. Now we wish to obtain confidence intervals $[L^i_p, R^i_p]$ for these quantities such that:
\[
P(\mu_i \in [L^i_p, R^i_p]\mid S) = 1-p
\]
where the set $S$ contains all the necessary conditioning information (selected index set, signs, etc.) for the selection procedure of interest. Furthermore, we would like to ensure False Coverage Rate (FCR) control as defined by \citet{BY2005} for the family of intervals $\left\{ [L^i_p, R^i_p]\right\}_{j \in E}$. The FCR of this family is be limited to $p$.

In keeping with the results of \citet{LeeSun2TaylorPostSel}, we take, for each $i \in E$, $L^i_p$ and $R_p^i$ to satisfy:
\[
F^{[\mathcal{V}^-, \mathcal{V}^+]}_{L^i_p, \sigma^2}(y_i) = 1 - \frac{p}{2}
\]
and
\begin{equation}
F^{[\mathcal{V}^-, \mathcal{V}^+]}_{R^i_p, \sigma^2}(y_i) = \frac{p}{2}
\label{eqTNCIlimits}
\end{equation}
This guarantees the individual coverage probabilities \textit{conditional on the selection procedure} and the FCR are $1-p$ and bounded above by $p$ respectively.

FCR control has become the gold standard in post-selection, simultaneous confidence interval performance. It is not a panacea, however: one can have FCR control guarantees with confidence intervals still exhibiting poor behaviour. This is detailed in the next section.

\subsection{A possible pitfall of FCR control}

\citet{BY2005} proposed the following family of intervals for the selected signals: for each $i \in E$, construct the interval:

\[
[y_i - \sigma z_{\frac{p}{2|E|}}, y_i + \sigma z_{\frac{p}{2|E|}}]
\]
This construction ensures FCR control over signals selected in $E$. 

\citet{EfronLargeScaleBook} takes issue with this construction. He runs a fairly simple simulation, concisely revealing some of the pitfalls of blind adherence to the FCR control mantra. In his simulation, he generates $n = 10000$ replications of $y_i \sim N(\mu_i, 1)$ with $\mu_i = 0$ for $i = 1001, 1002, \dots, 10000$ (so 9000 noise variables). The 1000 signals $\mu_i$ for $i = 1, 2, \dots, 1000$ are generated independently from $N(-3, 1)$.

\begin{figure}[htb]
	\centering
	\includegraphics[width = 140mm]{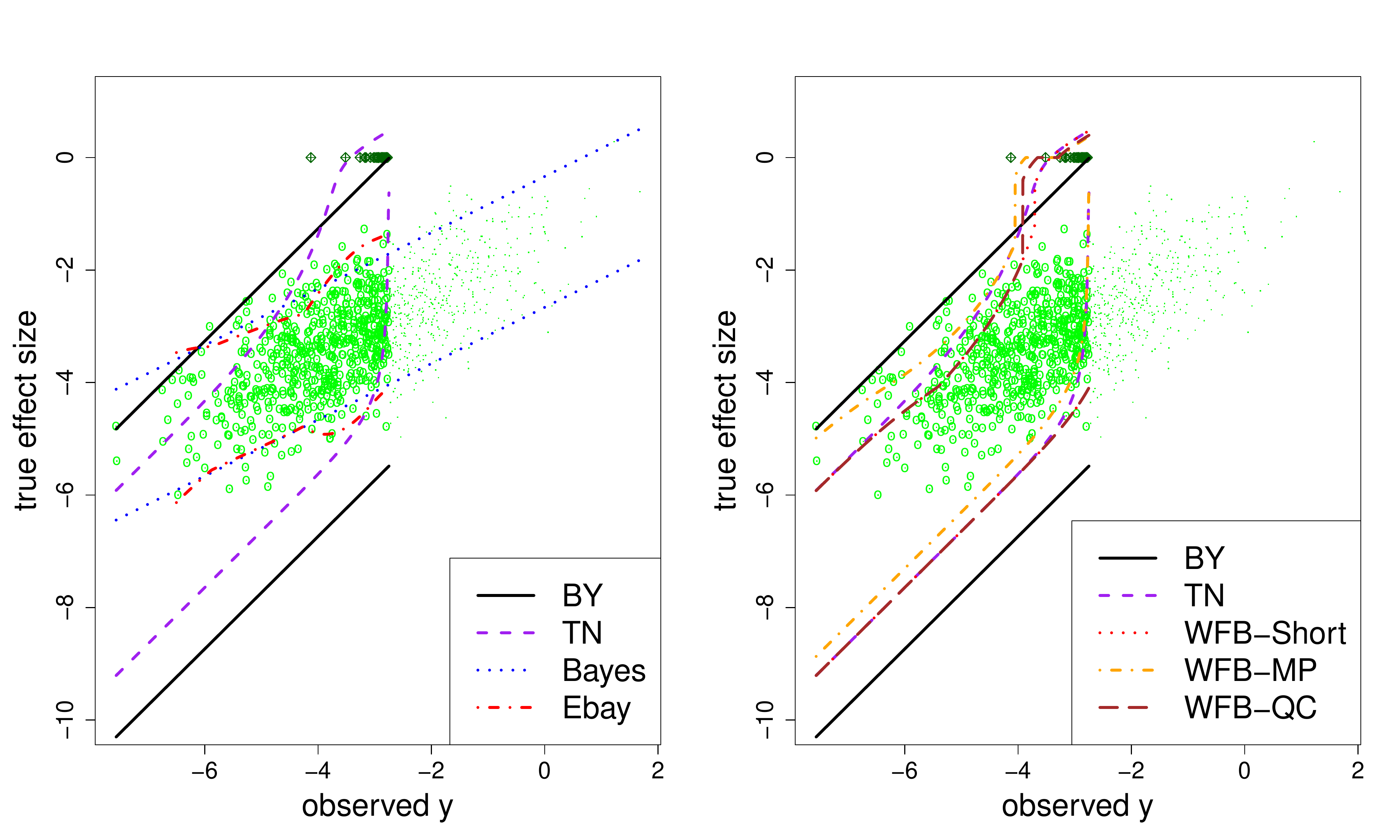}
	\caption{\emph{\textbf{Left panel}: True signal versus observed sample element for the Efron simulation. Green dots represent 1000 true signals; circled green dots, those signals selected by the $BH(0.1)$ procedure. Dark green diamonds represent false discoveries. Solid black lines delimit the Benjamini-Yekutieli (BY) post-selection intervals; purple broken lines, those obtained from the truncated Gaussian distribution (Equation \eqref{eqTNCIlimits}); red broken line, the empirical Bayes estimate of Efron; blue dotted lines, the true posterior Bayes prediction interval for the signals. All FCR guarantees at $p = 0.1$. \textbf{Right panel}: same setup, with black and purple lines as before. Red, orange and brown lines represent the three Weinstein-Fithian-Benjamini interval types.}}
	\label{figEfronCIs}
\end{figure}

The left panel of Figure~\ref{figEfronCIs} reproduces Figure 11.5 in \citet{EfronLargeScaleBook}. Benjamini-Yekutieli (BY) intervals are represented by black lines. As discussed in \citet{EfronLargeScaleBook}, these intervals seem to attain their FCR control guarantee partly by merit of their excessive width. Furthermore, notice how the cloud of selected signals (circled green dots) seem to lie toward the upper half of the BY intervals. Also, notice how the false discoveries (dark green diamonds) are not one covered by these intervals. Clearly then, there is no post-selection feedback giving us any notion of which could be the false rejections. FCR control is comforting, but we pay the price for it with very wide intervals that seem to be poorly centered around the selected signals.

Intervals based on the truncated Gaussian distribution (as in Equation~\eqref{eqTNCIlimits}) are plotted in purple. Notice how these are much narrower than the BY intervals and have non-uniform width.

Since the truncated Gaussian estimates are gleaned from maximizing a (conditional) likelihood of an exponential family, we can use the well-developed asymptotic theory of these types of estimators to construct the standard approximate confidence intervals:

\[
[\hat{\mu}_i - z_{\frac{p}{2}} \frac{1}{\sqrt{I_i(\hat{\mu}_i)}}, \hat{\mu}_i + z_{\frac{p}{2}} \frac{1}{\sqrt{I_i(\hat{\mu}_i)}}]
\]
where $\hat{\mu}_i$ is derived as in Equation~\eqref{eqCondMLE} and $I_i(\mu)$ is the Fisher information associated with the truncated Gaussian likelihood for observation $i$, evaluated at $\mu$. These intervals were found to have very similar shapes to those of the purple curves in Figure~\ref{figEfronCIs}, but tended to be more unstable, especially close to the truncation threshold. We note the existence of these intervals, but neither show them, nor consider them further.

Dotted blue lines show the true Bayesian posterior prediction interval for the signals. The prior on these signals is $N(-3, 1)$, the likelihood of the data $y$, given the signal is $N(\mu, 1)$, making the posterior $N(\frac{y+3}{2}, \frac{1}{2})$. The symmetric 90$\%$ probability interval for this distribution is shown in blue. Notice that neither the BY nor the truncated Gaussian (TN) intervals have the shape of the Bayesian posterior intervals. The former two types of interval are constructed post selection (hence do not see all the signals) and cannot hope to have the shape of the latter, which is privy to the entire distribution. If FCR control (with well-behaved intervals) is all we are after, we need not endeavor to replicate the shape of the posterior Bayes intervals exactly. All we require is good behavior for those signals we have selected.

Efron's empirical Bayes intervals are in red. These have the form
\begin{equation}
[{\rm E}_1(y) - \sqrt{{\rm Var}_1(y)}z_{\frac{p}{2}}, {\rm E}_1(y) + \sqrt{{\rm Var}_1(y)}z_{\frac{p}{2}}]
\label{eqEfronCI}
\end{equation}
where ${\rm E}_1(y) = \frac{y + l^\prime(y)}{1 - {\rm fdr}(y)}$, ${\rm Var}_1(y) = \frac{1 + l^{\prime\prime}(y)}{1 - {\rm fdr}(y)} - {\rm fdr}(y){\rm E}_1(y)^2$, ${\rm fdr}(y) = P(\mu = 0\mid y) = \frac{0.9\phi(y)}{f(y)}$ and $l(y) = \log{f(y)}$ with $f(y) $ the marginal distribution of the observed data. These formulae are obtained quite easily from an application of Bayes' and Tweedie's formulae. His proposed estimate of $f(y)$ is obtained via Lindsay's method, which comprises of a binning of $y$ values, followed by a Poisson natural cubic spline regression (with 7 degrees of freedom) of the counts of these bins onto their midpoints. Hence, 
\begin{equation}
\hat{f}(y) = C\cdot\exp{[\sum_{j = 1}^7 \hat{\beta}_jN_j(y)]}
\label{eqEfronTweedieF}
\end{equation}
with $N_j(x)$ the $j^{th}$ natural cubic spline basis function in the point $x$ and $C$ ensures the density integrates to 1. Estimates to the function $l$ and its derivatives are now easily obtained as:
\[
\hat{l}(y) = \log(C) + \sum_{j = 1}^7\hat{\beta}_jN_j(y)
\]
\[
\hat{l}^\prime(y) = \sum_{j = 1}^7\hat{\beta}_jN^\prime_j(y)
\]
\[
\hat{l}^{\prime\prime}(y) = \sum_{j = 1}^7\hat{\beta}_jN^{\prime\prime}_j(y)
\]

These quantities are then fed into Equation~\eqref{eqEfronCI} to obtain estimates of the confidence intervals. The theory pertaining to these intervals is quite neat and anecdotal evidence (like plots similar to the left panel of Figure~\ref{figEfronCIs}) suggest that they may be very useful. They are derived from a method that tries to estimate the entire density of the data, allowing the estimated CIs to bend closer to the true posterior Bayes interval. We are not aware of FCR control guarantees on only the selected signals though. Furthermore, CIs estimated using the empirical Bayes methodology are very unstable and often lead to inconsistent numerical results. We present an example in Appendix~\ref{appendix:numer_instab}.

\subsection{Competitors to truncated Gaussian post-selection confidence intervals}

\citet{WFB2013} propose three types of intervals, also based on the truncated Gaussian distribution. Their method of construction is somewhat more involved and requires the inversion of the acceptance region of a particular hypothesis test each time. The three confidence intervals are constructed, respectively, from a test with shortest acceptance region (WFB-Short), a conditional modified Pratt procedure (WFB-MP) and a conditional, quasi-conventional interval with parameter trading off between power and interval length (WFB-QC). Details are to be found in their paper. 

Figure~\ref{figEfronCIs} (right panel) compares the intervals constructed by our method and the three of \citet{WFB2013} for the same replication of Efron's experiment as in the left panel of that plot. Weinstein-Fithian-Benjamini intervals were obtained using the default settings in the software provided with their manuscript. Intervals seem to be of similar shapes. However, differences seem significant enough (especially around -5 in the upper limits and around -2 in the lower limits) that a more extensive simulation study is called for. \citet{WFB2013} show how their intervals also have FDR control guarantees. The extended simulation study then will focus on other properties of the intervals: width, symmetry of coverage misses and symmetry of coverage hits.

\subsection{A simulation study}
Five methods of confidence interval construction were compared: Benjamini-Yekutieli intervals, our truncated Gaussian intervals and the three of \citet{WFB2013}. $S = 30$ replications of the Efron experiment were run and the interval types compared on three criteria (to be detailed later):
\begin{itemize}
	\item Interval width.
	\item Symmetry of coverage misses.
	\item Symmetry of coverage hits.
\end{itemize}

The reader is reminded that Efron's experiment generates a sample of size $n = 10000$ from $N(\mu_i, 1)$ with 1000 $\mu_i$ signals from a $N(\nu, 1)$ prior distribution and 9000 noise variables. The BH(0.1) procedure is then run to determine a set of variables of interest to further study. Post-selection intervals are then computed for these only, aiming for an FCR of 0.1. We looked at two signal strengths $\nu = -3$ and $\nu = -5$. Figures~\ref{figCISim3} and~\ref{figCISim5} summarize the main findings.

\begin{figure}[htb]
	\centering
	\includegraphics[width=150mm]{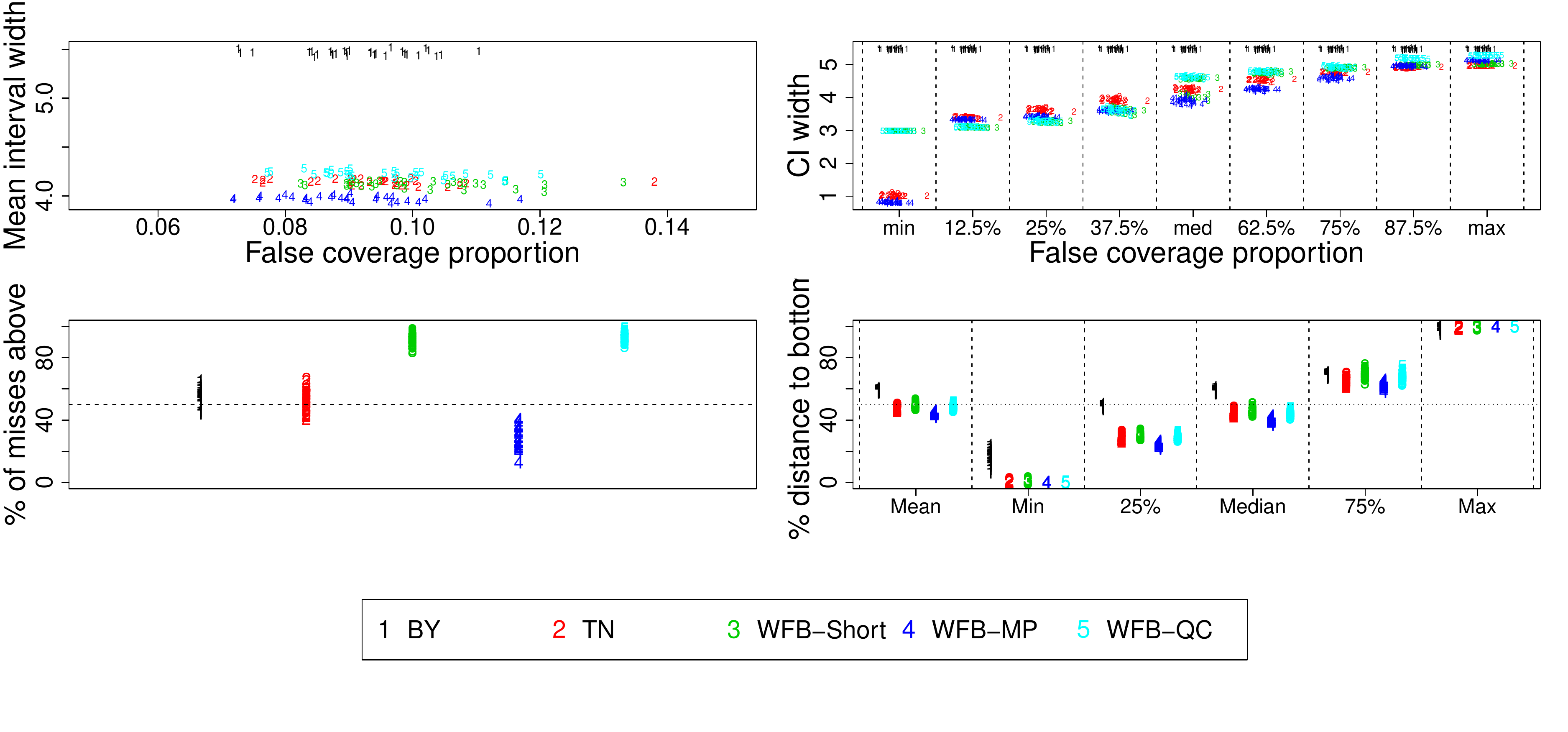}
	\caption{\emph{Signal size $\nu = -3$. \textbf{Top left:} Mean interval width for each replication versus false coverage proportion. 1: Benjamini-Yekutieli intervals, 2: truncated Gaussin intervals, 3: Weinstein-Fithian-Benjamini Shortest acceptance region intervals, 4: Weinstein-Fithian-Benjamini Modified Pratt procedure intervals, 5: Weinstein-Fithian-Benjamini: quasi-conventional intervals. \textbf{Top right:} A sequence of zooms on interval [0.05, 0.15], with each strip plotting a sample summary statistic of interval widths versus false coverage proportions for each of the replications. Strips plot minimum intervals widths, all the octiles and the maximum interval width. \textbf{Bottom left:} Proportion of intervals at each replication missing their true signal where the latter lies above the former. Horizontal axis has no interpretation. Black horizontal line at proportion 0.5, for reference. \textbf{Bottom right:} Minimum, maximum and quantiles of proportion of distance from true signal to lower bound to that of the entire interval width for each replication. Again, horizontal axis has no interpretation.  Black horizontal line at proportion 0.5, for reference.}}
	\label{figCISim3}
\end{figure}

\begin{figure}[htb]
	\centering
	\includegraphics[width=150mm]{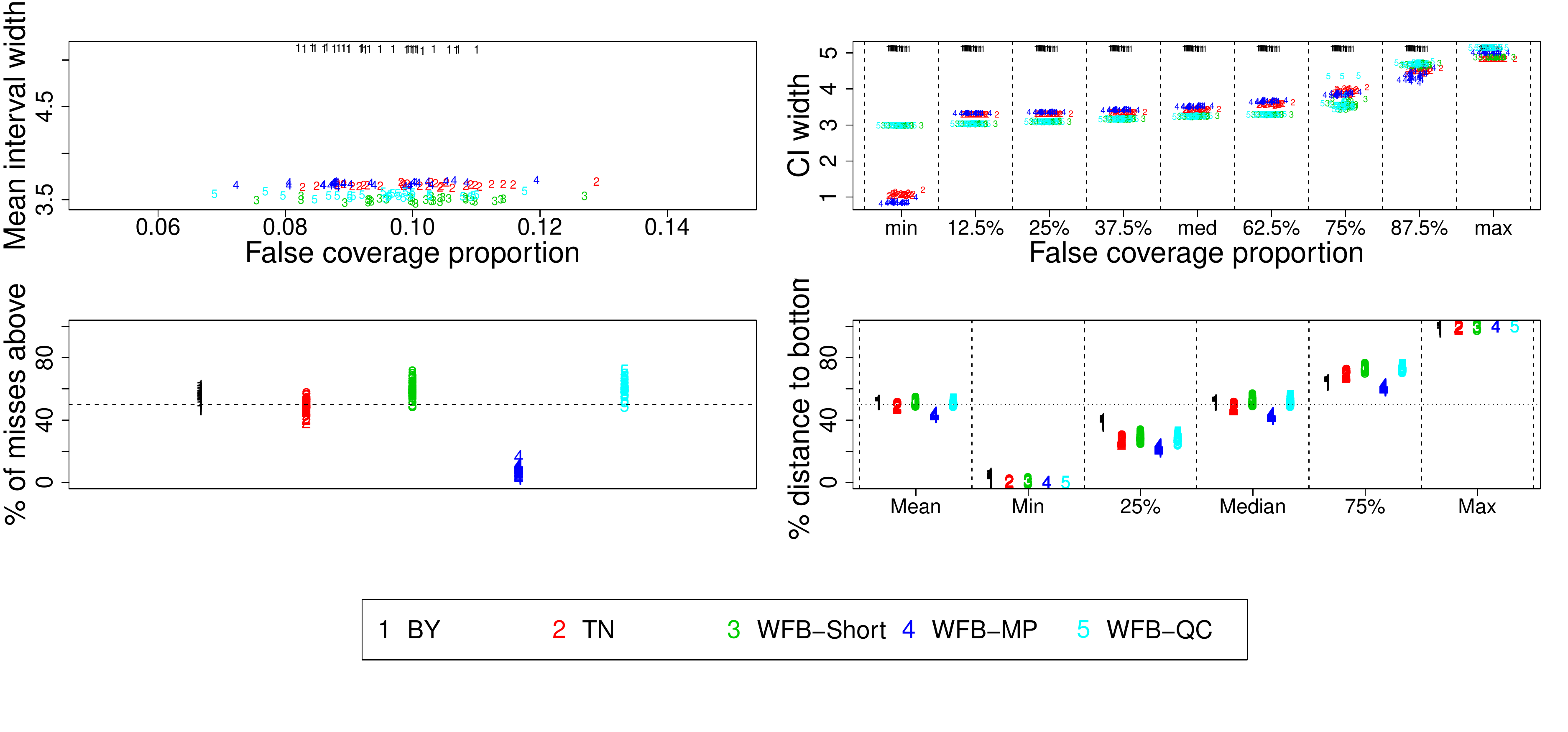}
	\caption{\emph{As for Figure~\ref{figCISim3}, but with signal size $\nu = -5$.}}
	\label{figCISim5}
\end{figure}

\subsection{Interval width}
The top rows of Figures~\ref{figCISim3} and~\ref{figCISim5} pertain to interval widths. For each replication of the experiment (dropping subscripts referring to the replication number), we identify $|E|$ items for further study. We construct an interval $[L^i_p, R^i_p]$ for each of these with $p = 0.1$. The left panel plots mean interval width
\[
\frac{1}{|E|}\sum_{i \in E}(R_p^i - L_p^i)
\]
against false coverage proportion
\[
\frac{1}{|E|}\sum_{i \in E}I\{\mu_i \not\in [L^i_p, R^i_p]\}
\]
for each method for each replication. No method's point cloud lies significantly to the right or left of those of the others. Also, all point clouds seem to cluster around 0.1 on the horizontal axis, as is expected since all methods have an false coverage rate control guarantee. Methods differ in the average interval width they construct. Benjamini-Yekutieli intervals are by far the widest, with Weinstein-Fithian-Benjamini-MP on average the narrowest. The other three methods seem to produce intervals of similar width.

The right panel of these figures makes similar plots, only for summary statistics other than the mean (namely, the minimum, maximum and all the octiles). Each strip in this panel (delimited by vertical dashed black lines) zooms in on the horizontal axis interval [0.05, 0.15]. The horizontal axis overall does not have its usual interpretation. This is just so that we could fit many different plots in one panel. Now we get a richer view of the intervals constructed by each method. The truncated Gaussian and Weinstein-Fithian-Benjamini-MP seem to construct intervals with the shortest minimum length. The truncated Gaussian's shortest intervals seem to be shorter than most, with the left middle of the interval width distribution perhaps to the right of the others (indicating wider intervals), but with slower growth of interval width, eventually resulting in the shortest maximum interval widths. Bear in mind that the methods share similar coverage. It is interesting to see how they achieve this coverage though. Benjamini-Yekutieli intervals are of constant length and much wider than the rest. Very little difference in interval width is encountered over different signal sizes (the top panels of the two figures look very similar).

Based solely on interval width, one would probably say that the Weinstein-Fithian-Benjamini-MP intervals present as marginally superior to the rest. However, there are other characteristics we require of intervals, one of which is that coverage misses be symmetric.

\subsection{Symmetry of coverage misses}

The bottom left panels of Figures~\ref{figCISim3} and~\ref{figCISim5} plot the proportion of upward coverage misses
\[
\frac{\sum_{i \in E}I\{\mu_i > R^i_p\}}{\sum_{i \in E}I\{\mu_i \not\in [L^i_p, R^i_p]\}}
\]
for each of the methods, for each replication. The horizontal axis has no interpretation. If a method produces, on average, intervals that are equally likely to miss the true signal above or below, we would expect this quantity to hover around 0.5. Here we see significant differences between Figures~\ref{figCISim3} and~\ref{figCISim5}.

For smaller signals (Figure~\ref{figCISim3}), we see that the truncated Gaussian method seems to be the only one producing intervals with roughly symmetric coverage (its points are neatly bisected by the horizontal reference line at 0.5). Benjamini-Yekutieli intervals tend to miss slightly more signals with the signals lying above the interval, while Weinstein-Fithian-Benjamini-Short and Weinstein-Fithian-Benjamini-QC making considerably more of these errors. Also, the shorter Weinstein-Fithian-Benjamini-MP intervals seem to miss many of the true signals above. Notice the improvement of all (except Weinstein-Fithian-Benjamini-MP) for larger signals (bottom left panel, Figure~\ref{figCISim5}).

\subsection{Symmetry of coverage hits}

We require not only that our intervals miss symmetrically, but also that those true signals covered generally find themselves in the middle of its constructed interval. The bottom right panels of Figures~\ref{figCISim3} and~\ref{figCISim5} plot the summary statistics of an interval skewness measure:
\[
{\rm skew}_i = \frac{\mu_i - L^i_p}{R^i_p - L^i_p}.
\]
Again, the horizontal axis has no interpretation other than to separate the methods. If true signals tend to find themselves in the middle of the intervals, we would expect the mean and median of these quantities (for each replication) to be around 0.5.

Indeed, for small signals (Figure~\ref{figCISim3}) we see that the means and medians (first and fourth strips of bottom right panel) seem to be roughly 0.5 for the truncated Gaussian, Weinstein-Fithian-Benjamini-Short and Weinstein-Fithian-Benjamini-QC intervals. Benjamini-Yekutieli intervals seem to place true signals closer to the upper than the lower limit (as we expect, considering Figure~\ref{figEfronCIs}), while the Weinstein-Fithian-Benjamini-MP interval tends to put true signals closer to the lower limit. Again, matters improve when signal sizes get larger.

Overall then, even for moderate signals, the truncated Gaussian method seems to produce intervals that compete in width with the shortest of the post-selection intervals, while maintaing better symmetry in both coverage misses and hits.

\section{Discussion}\label{sec:conclude}
We have applied the ideas of \citet{LeeSun2TaylorPostSel} to the orthogonal design setting, whereby we have a single sample of elements with different underlying signals. We wish to identify those with non-zero signals and then produce good estimates (both point and interval) of these after taking into account the selection bias of the winners' curse. It was shown that top-$K$ and Benjamini-Hochberg selection procedures also satisfy the linear constraint representation of \citet{LeeSun2TaylorPostSel}, allowing us to leverage many of their results.

The resulting estimates of post-selection signal size compete with the best known post-selection estimates for moderately sparse signals. Especially pleasing is the performance of our method after using a Benjamini-Hochberg procedure to select signals, with $q$ taking on practicable values. Furthermore, the confidence intervals produced by the method improve significantly on industry standard Benjamini-Yekutieli intervals, competing favorably against recently developed competitors.  

Another promising aspect of our estimator is that it can be shown to have worst case risk bounded within a constant multiple of minimax risk over a rich set of parameter spaces evoking sparsity of signal. This holds not only for squared error loss, but for losses of the form $||\hat{\mu} -\mu||^r_r$, $0 < r \leq 2$. We presented a bound and argued that it could most likely be tightened considerably.

\bibliographystyle{agsm}
\bibliography{mmBib}


\appendix
\section{Proof of Lemma~\ref{lemma:squeeze}} \label{appendix:proof_lemma_squeeze}

First we prove monotonicity of the derivatives of the mean functions of truncated Gaussian random variables:
	\begin{lemma}\label{lemma:monotone_derivative}
		Suppose $Y \sim N(\mu, 1)$. Now define for $i = 1, 2$, $E_i(\mu, t) = E[Y|Y\in I_i]$ with $I_1 = (-\infty, -t] \cup [t, \infty)$ and $I_2 = [-t, t]$. Then for $i = 1,2$:
		\[
			\frac{\partial}{\partial \mu}E_i(\mu, t) \geq 0
		\]
	\end{lemma}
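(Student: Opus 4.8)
The plan is to establish the following general fact, from which the lemma is immediate: if $Y \sim N(\mu, 1)$ and $A \subseteq R$ is any fixed measurable set with $P_\mu(Y \in A) > 0$, then
\[
\frac{\partial}{\partial\mu}\, E[Y \mid Y \in A] = {\rm Var}(Y \mid Y \in A) \ge 0 .
\]
Applying this with $A = I_1 = (-\infty, -t] \cup [t, \infty)$ gives the case $i = 1$, and with $A = I_2 = [-t, t]$ the case $i = 2$; neither set depends on $\mu$, so the $\mu$-derivative acts only on the density.

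First I would write $f_\mu(y) = \phi(y - \mu)$ for the $N(\mu,1)$ density, set $D(\mu) = \int_A f_\mu(y)\,dy$ and $N(\mu) = \int_A y\, f_\mu(y)\,dy$, so that $E[Y \mid Y \in A] = N(\mu)/D(\mu)$, and then use the elementary identity $\partial_\mu f_\mu(y) = (y-\mu) f_\mu(y)$. Differentiating under the integral sign — which is legitimate because, locally uniformly in $\mu$, the integrands and their $\mu$-derivatives are dominated by integrable functions of the form $(\text{polynomial in } y)\, e^{-(y - \mu_0)^2/4}$ — yields $D'(\mu) = N(\mu) - \mu D(\mu)$ and $N'(\mu) = \int_A y^2 f_\mu(y)\,dy - \mu N(\mu)$. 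Substituting these into $\partial_\mu(N/D) = (N'D - N D')/D^2$, the $\mu$-terms cancel and one is left with $\int_A y^2 f_\mu(y)\,dy / D(\mu) - (N(\mu)/D(\mu))^2 = E[Y^2 \mid Y\in A] - (E[Y \mid Y\in A])^2$, i.e. the conditional variance, which is nonnegative. Equivalently, one can note that $N(\mu,1)$ is an exponential family in the natural parameter $\mu$, so that $E[Y \mid Y \in A] = \mu + (\log D)'(\mu)$ and hence $\partial_\mu E[Y \mid Y \in A] = 1 + (\log D)''(\mu) = {\rm Var}(Y \mid Y \in A)$, the last equality being the usual cumulant-generating-function identity for the exponentially tilted Gaussian restricted to $A$.

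I do not expect a genuine obstacle: the only step requiring any care is the interchange of differentiation and integration, and that is routine given the rapid decay of the Gaussian tails, even though $I_1$ is unbounded (all moments of the conditioned variable are finite). Everything else is a one-line algebraic reduction, and the nonnegativity conclusion is forced by the appearance of a variance.
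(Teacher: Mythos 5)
Your proof is correct and follows essentially the same route as the paper: differentiate the conditional mean in $\mu$ under the integral sign and identify the result as the conditional variance $E[Y^2\mid Y\in I_i]-E[Y\mid Y\in I_i]^2\geq 0$. The paper phrases this via $E_i(\mu,t)=\mu+h_i(\mu,t)$ and $\partial_\mu(1/Z_i)=-h_i/Z_i$ rather than your quotient-rule/exponential-family formulation, but the key identity and conclusion are identical.
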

	
	\begin{proof}
		Define $M_i(\mu, t) = E[Y^2| Y\in I_i]$ and  $Var_i(\mu, t) = {\rm Var}[Y|Y\in I_i] = M_i(\mu, t) - E_i(\mu, t)^2$ and note that $E_i(\mu, t) = \mu + h_i(\mu, t)$ where
		\[
			h_1(\mu, t) = \frac{\phi(t-\mu) - \phi(-t-\mu)}{1 - \Phi(t-\mu) + \Phi(-t-\mu)} = \frac{\phi(t-\mu) - \phi(-t-\mu)}{Z_1} (say)
		\]
		and
		\[
			h_2(\mu, t) = \frac{\phi(-t-\mu) - \phi(t-\mu)}{\Phi(t-\mu) - \Phi(-t-\mu)} = \frac{\phi(-t-\mu) - \phi(t-\mu)}{Z_2} (say)
		\]
		Now
		\[
			\frac{\partial}{\partial \mu}\frac{1}{Z_i} = -\frac{h_i(\mu, t)}{Z_i}
		\]
		so that
		\begin{align*}
			\frac{\partial}{\partial \mu}E_i(\mu, t) &= \frac{\partial}{\partial \mu} \int_{-t}^t x\frac{\phi(x-\mu)}{Z_i}\,dx \\
			&= \int_{-t}^t x(x-\mu)\frac{\phi(x-\mu)}{Z_i}\,dx - h_i(\mu, t)\int_{-t}^tx\frac{\phi(x-\mu)}{Z_i}\,dx \\
			&= M_i(\mu, t) - \left(\mu + h_i(\mu, t)\right)E_i(\mu, t) = M_i(\mu, t) - E_i(\mu, t)^2 = Var_i(\mu, t) \geq 0
		\end{align*}
	\end{proof}
	
	We use this lemma to prove Lemma~\ref{lemma:squeeze}
	\begin{proof}[Proof of Lemma~\ref{lemma:squeeze}]
		We prove the case where $y > t$, since the other side follows by symmetry. Suppose $y > t > 0$. We know by differentiating and setting to zero that $\hat{\mu}^{TN}_t$ satisfies $E_1(\hat{\mu}^{TN}_t, t) = y$. Furthermore, from Lemma~\ref{lemma:monotone_derivative}, this mean function is monotone and increasing. Thus, our stated result will follow if $E_1(\hat{\mu}^{HT}_t, t) \geq y$ and  $E_1(\hat{\mu}^{ST}_t, t) \leq y$.
		
		If $y > t$, then $\hat{\mu}^{HT}_t = y$ and 
		\[
			E_1(\hat{\mu}^{HT}_t, t) = y + \frac{\phi(y - t) - \phi(y + t)}{1 - \Phi(t - y) + \Phi(-t - y)} \geq y
		\]
		since $y + t > y - t > 0$ and $\phi(\cdot)$ is decreasing on the positive real line.
		
		Similarly, if $y > t$ then $\hat{\mu}^{ST}_t = y - t = \epsilon$ (say) and
		\begin{align*}
			E_1(\hat{\mu}^{ST}_t, t) &= y - t + \frac{\phi(t - \epsilon) - \phi(t + \epsilon)}{1 - \Phi(t - \epsilon) + \Phi(-t - \epsilon)} \\
			&= y - t + \frac{\phi(\epsilon - t) - \phi(-\epsilon - t)}{2\Phi(-\epsilon - t)+\Phi(\epsilon - t) - \Phi(-\epsilon - t)} \\
			&\leq y - t + \frac{\phi(\epsilon - t) - \phi(-\epsilon - t)}{\Phi(\epsilon - t) - \Phi(-\epsilon - t)}\\
			&= y - E_2(t, \epsilon)
		\end{align*}
		Integrating both sides of the result of the lemma above, we get
		\[
			E_2(t,\epsilon) \geq E_2(0, \epsilon) = 0
		\]
		by the symmetry of $\phi(\cdot)$. Hence $E_1(\hat{\mu}^{ST}_t, t) \leq y$, as required.
		
	\end{proof}


\section{Proof of Theorem~\ref{theo:risk}}\label{appendix:proof_theo_risk}

The proof of our Theorem~\ref{theo:risk} is a simple application of the results of \citet{AbramBenDonJohn}. Their paper provides a very thorough and incisive analysis of the quantities of interest and we leverage their results to a great extent. We collect some of their notation and results here for ease of reference. These are used extensively in what follows, especially in Appendix~\ref{appendix:soft_thresh}.
		
		\citet{AbramBenDonJohn} note that $\hat{k}_F$ is the leftmost local minimum of $S_k^{(r)} = \sum_{l = 1}^k t_l^r + \sum_{l = k+1}^n |y|_{(l)}^r$ for $ 0 < r \leq 2$, which, with a little thought, can be seen to lead the the index selected by the $BH(q)$ as the last rejection, or smallest non-zero signal. The minimax risk over parameter space $\Theta_n$ is denoted as $R_n(\Theta_n) = \inf_{\hat{\mu}}\tilde{\rho}(\hat{\mu}, \Theta_n)$. Asymptotic minimax risks for each of the parameter spaces of interest is given by
		\begin{enumerate}
			\item $R_n(\ell_0[\eta_n]) \sim n\eta_n\tau_\eta^r$
			\item $R_n(m_p[\eta_n]) \sim \frac{r}{r-p} R_n(\ell_p[\eta_n])$
			\item $R_n(\ell_p[\eta_n]) \sim n\eta_n\tau_\tau^{r-p}$
		\end{enumerate}
		where $\tau_\eta = \sqrt{2\log\eta_n^{-p}}$
		
		They say that an event $A_n(\mu)$ is $\Theta_n$-likely if there exist constants $c_0$ and $c_1$, not depending on $n$ and $\Theta_n$ such that
		\[
			\sum_{\mu \in \Theta_n}P_\mu(A^c_n(\mu)) \leq c_0\exp(-c_1\log^2n)
		\]
		
		Furthermore, they define $\alpha_n = 1/b_4\tau_\eta$ with $b_4 = (1-q)/4$. They go on to define many different indices. We provide a list:
		\begin{itemize}
			\item $k(\mu) = \inf\{k \in \Re^+: \sum_{l = 1}^nP_\mu(|y_l| > t_k) = k\}$
			\item $\hat{k}_F$, $\hat{k}_G$ and $\hat{k}_r$ are the leftmost local, rightmost local and global minima of $S^{(r)}_k$.
			\item $k_n = [n\eta_n]$ for $\Theta_n = \ell_0[\eta_n]$ and $k_n = n\eta_n^p\tau_\eta^{-p}$ for $\Theta_n = \ell_p[\eta_n]$ and $\Theta_n = m_p[\eta_n]$.
			\item $k_-(\mu) = (k(\mu) - \alpha_nk_n)I\{k(\mu) > 2\alpha_nk_n\}$
			\item $k_+(\mu) = \alpha_nk_n + \max\{k(\mu), \alpha_nk_n\}$
			\item $\kappa_n = [\alpha_n + (1-q_n)^{-1}]k_n$ for $\ell_0[\eta_n]$ and $\kappa_n = [\alpha_n + (1-d_n-q_n)^{-1}]k_n$ for $m_p[\eta_n]$ with $d_n = 2c_0\tau_\eta^{-1}$ ans $c_0$ a constant defined in the paper.
		\end{itemize}
		
		The authors show that $\sup_{\mu \in \Theta_n}k_+(\mu) \leq \kappa_n$. Also note that $k_+(\mu) - k_-(\mu) \leq 3\alpha_nk_n$. The authors show that it is $\Theta_n$-likely on the parameter spaces of interest that $k_-(\mu) \leq \hat{k}_F \leq \hat{k}_G \leq k_+(\mu)$. Furthermore, they prove the monotonicity of $k \rightarrow kt_{k}^r$ for $0 \leq r \leq 2$, so that it is $\Theta_n$-likely that
		\begin{equation}\label{eq:htst_diff}
			\hat{k}_Ft_{\hat{k}_F}^r \leq k_+(\mu)t_{k_+(\mu)}^r \leq \kappa_nt_{\kappa_n}^r = \frac{u_{rp}}{1-q_n}R_n(\Theta_n)(1 + o(1))
		\end{equation}
		where $q_n$ is a sequence satisfying the assumptions in the statement of Theorem~\ref{theo:risk}. The second term of (\ref{eq:risk_bound}) is analysed in \citet{AbramBenDonJohn} under the assumptions stated in the theorem. Combining their result with our (\ref{eq:htst_diff}) completes the proof of our Theorem~\ref{theo:risk} for $0 \leq r < 2$.

We can use the results of Theorem 4 in \citet{JiangZhangFDRST} to tighten the bound for $r = 2$ (squared error loss). In the reference they prove that
\[
\sup_{\mu \in \Theta_n} \rho(\hat{\mu}^{ST}_{\hat{t}_F}, \mu) \leq R_n(\Theta_n)(1 + o(1))
\]
for $(L_{0, n}/n^{1+\delta_{1, n}})^{p/2}/n \ll \eta_n^p \ll 1$ where $L_{0, n} = (\log n)^{-3/2}\left(\delta_{1,n}(\log n)^{3/2} + (\log \log n)\sqrt{\log n}\right)^{1+\delta_{1, n}}$ and $\delta_{1, n} \rightarrow 0$ when $0 < p \leq 2$ and $n^{-1} \ll \eta_n \ll 1$ for $p = 0$ (the nearly black case). Note that when we have $\eta_n^p \in [\log^5 n, n^{-\delta}]$ then we meet the requirement and we can use the bound in~(\ref{eq:risk_bound2}) to conclude the (tighter) theorem statement for $r = 2$.


\section{Numerical instability of Empirical Bayes CIs}\label{appendix:numer_instab}
Although the theory is sound and rather intuitive, numerical instability makes it very difficult to apply the empirical Bayes framework to the construction of confidence intervals. 

\begin{figure}[htb]
	\centering
	\includegraphics[width=100mm]{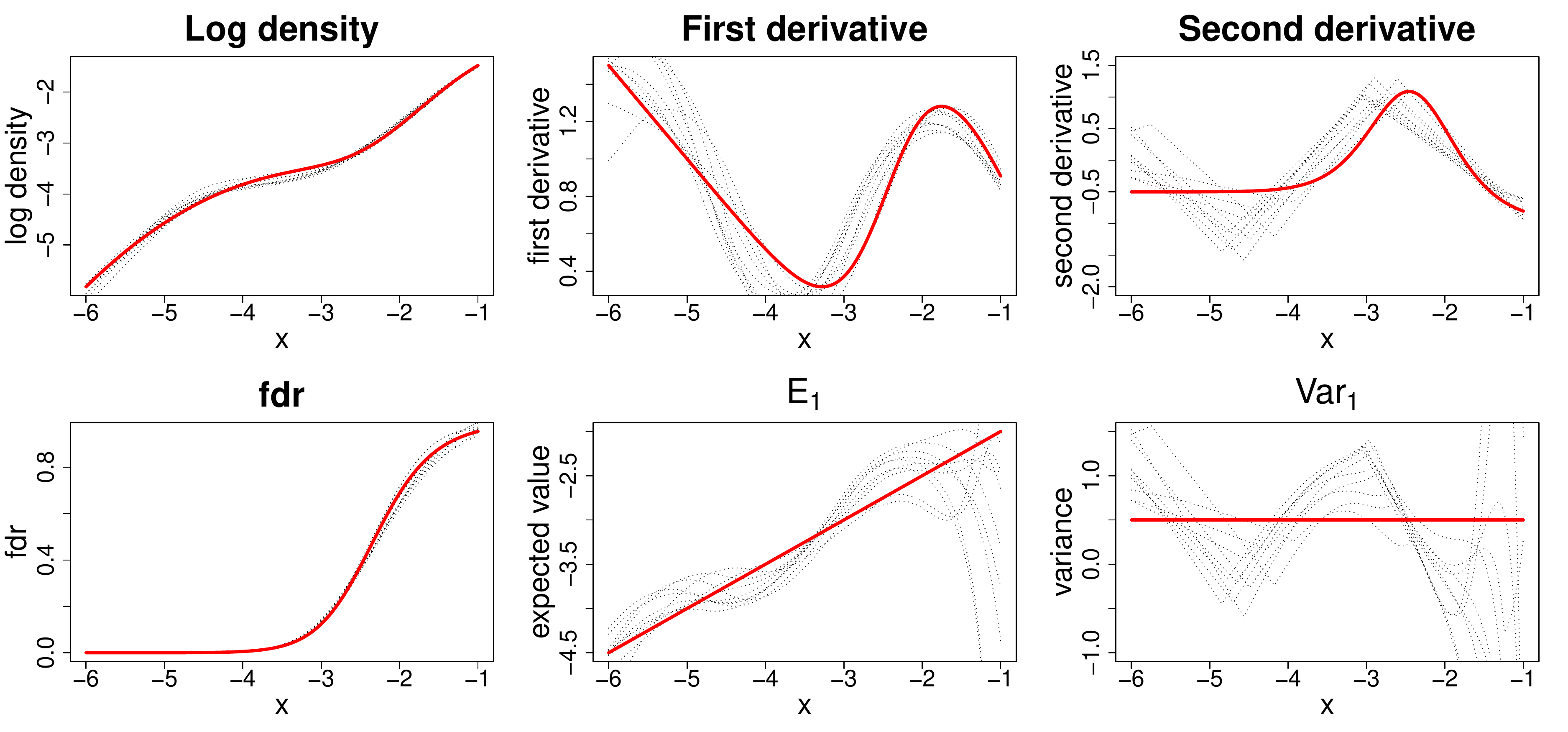}
	\caption{\emph{\textbf{Top row:} Estimates of the log density and its first two derivatives (over the left half of the distribution) using the empirical Bayes framework and Poisson regression density estimate of Efron. This is for the Efron simulation setup detailed in Section 6.1. Thick red lines show the true function. \textbf{Bottom row:} Estimates of local FDR, posterior expectation of the signals and their variance based on the density estimates and the formulae below Equation~\ref{eqEfronCI}. Again, red curves show the truth.}}
	\label{figEfronDensity}
\end{figure}

Figure~\ref{figEfronDensity} shows the density estimates from $S = 20$ replications of the Efron simulation of Figure~\ref{figEfronCIs}. Notice how well the log density is estimated. Local FDR (bottom left) is also well estimated, because it depends only on the estimated density. Notice, however, how estimates of density derivatives (and quantities based on them) seem to become progressively worse at higher orders. Estimates of the second derivative (top right) look decidedly ropey. This has dire implications for the estimated posterior variance of our signals (just below it, bottom right). Notice how often we underestimate the true variance (which is constant at 0.5). Variance estimates quite often become negative as well, leading to utterly invalid intervals. This is not a fault of the theory, but stems rather from the numerical instability of estimation of these higher order derivatives.

It should be noted that the numerical instability is not only the fault of the application of Lindsay's method, but seems rather to be a result of the inherent difficulty of estimating a function and its first two derivatives and the shortcomings of the tools we have to do so. Figure~\ref{figWagerDensity} shows the non-linear programming density estimates of \citet{WagerDens} for the same replications of the Efron experiment. These density estimates have been shown to outperform just about all other competitors. Second order derivative estimates seem more stable, but notice they are still lower than they should be at many places and still lead to negative posterior variances. Perhaps some effort should be spent on finding a new class of density estimators, with specific focus on regularising the second derivative estimate. Otherwise empirical Bayes methods are lost to the successful construction of post-selection intervals. These ideas are not pursued further here. 

\begin{figure}
	\centering
	\includegraphics[width=100mm]{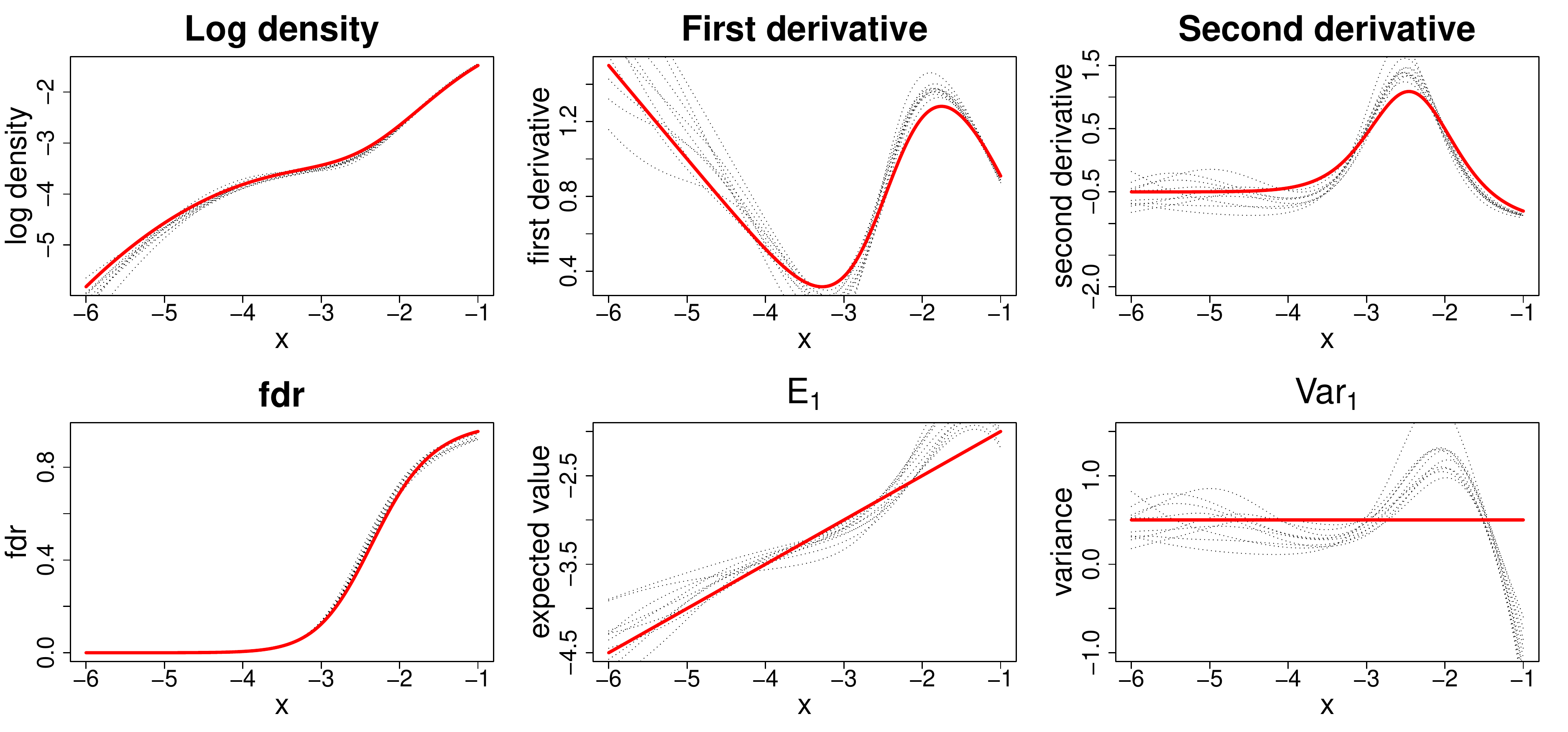}
	\caption{\emph{As for Figure~\ref{figEfronDensity}, but for the empirical Bayes, non-linear programming density estimate of Wager.}}
	\label{figWagerDensity}
\end{figure}

\end{document}